\documentclass[12pt]{article}

\usepackage[a4paper,margin=1in]{geometry}
\usepackage{fix-cm}
\usepackage[T1]{fontenc}
\usepackage{amsmath,amssymb,amsthm,bm,mathtools}
\usepackage{newtxtext,newtxmath}
\usepackage{booktabs,multirow,array}
\usepackage{graphicx}
\usepackage{float}
\usepackage[super,sort&compress]{natbib}
\usepackage[hidelinks,hypertexnames=false]{hyperref}
\usepackage{algorithm}
\usepackage{algpseudocode}
\usepackage{setspace}
\usepackage{enumitem}

\graphicspath{{Figures/}}
\newtheorem{definition}{Definition}
\newtheorem{proposition}{Proposition}
\newtheorem{corollary}{Corollary}
\newtheorem{remark}{Remark}
\newcommand{\R}{\mathbb{R}}
\newcommand{\E}{\mathbb{E}}
\newcommand{\Var}{\operatorname{Var}}
\newcommand{\Normal}{\mathcal{N}}
\newcommand{\GIG}{\operatorname{GIG}}

\title{Mixing-Law Uncertainty in Multivariate Normal Mean-Variance Mixtures:\\
Semi-parametric Estimation and Robust Cumulative-Prospect Decisions}
\author{Nuerxiati Abudurexiti\\[0.4em]
\normalsize School of Economics and Management, Xinjiang University,\\
\normalsize Urumqi 830046, China}
\date{}

\begin{document}
\maketitle

\begin{abstract}
The distribution of a normal mean-variance mixture depends on the law of its
positive mixing variable. We compare six parametric mixing laws with a grid
nonparametric maximum likelihood estimator under the same determinant
identification constraint. The mixing mean $m=\E(Z)$ is estimated and is not
fixed at one. A paired block bootstrap is used to compare multivariate holdout
log scores. The models that cannot be distinguished from the model with the
largest score define a finite ambiguity set. We then consider a cumulative
prospect problem on a common portfolio direction. For each model in the set,
the NMVM representation gives a scalar projected return and a corresponding
prospect-value function of the exposure. The distributionally robust decision
maximizes the lower envelope of these functions. We prove existence of a
solution, give the candidate points for the piecewise smooth problem, derive a
reference-gap scaling result, and construct an interval branch-and-bound
certificate for the finite-scenario optimum. In an application to 30 stock
returns, the mixture models give higher holdout density scores than the
multivariate Gaussian model. Several parametric and semi-parametric models,
however, remain in the ambiguity set. The worst-case model is therefore
determined at the portfolio optimization stage rather than selected in advance
from a point estimate of the holdout score.
\end{abstract}

\noindent\textbf{Keywords:} normal mean-variance mixture; nonparametric maximum likelihood; cumulative prospect theory; distributional robustness; portfolio choice

\section{Introduction}

Let the $d$-dimensional return vector $X$ satisfy
\begin{equation}\label{eq:intro-nmvm}
X=\mu+\gamma Z+\sqrt{Z}AN_d,
\qquad \Sigma=AA^\top,
\end{equation}
where $Z>0$ is independent of $N_d\sim\Normal_d(0,I_d)$. Conditional on $Z=z$,
$X$ is Gaussian with mean $\mu+z\gamma$ and covariance matrix $z\Sigma$.
Thus, $\mu$ is the location parameter, $\gamma$ determines the change in the
conditional mean with $z$, and $\Sigma$ is the conditional covariance matrix
at $z=1$. The mixing law $G=\mathcal L(Z)$ determines the random scale.
Normal inverse Gaussian, variance gamma, skewed Student-$t$, generalized
hyperbolic, and asymmetric Laplace distributions are obtained by different
choices of $G$.\cite{Aas_Haff_2006_GH_Skew_t,Adcock_2015_Skewed_in_Finance_Actuarial_science,BarndorffNielsen1997NIG,MadanCarrChang1998VG}

The choice of $G$ is part of the statistical specification. Under a
parametric specification, the parameters of $G$ and
$(\mu,\gamma,\Sigma)$ can be estimated by an EM or ECM
algorithm.\cite{Dempster_1977_EM_algorithm,Protassov2004EM} Alternatively,
the Kiefer-Wolfowitz approach estimates $G$ as a discrete probability
measure without selecting a continuous family.\cite{KieferWolfowitz1956,Laird1978NPMLE}
A parametric model and a grid estimator can fit the center of the return
distribution similarly and still imply different latent-scale tails. They
must therefore be compared under the same structural parameterization,
identification rule, and training-holdout split.

Portfolio calculations under NMVM returns usually proceed under one selected
mixing law. This is the case, for example, in likelihood-based generalized
hyperbolic models and in mean-CVaR-skewness portfolio
problems.\cite{Adcock_2015_Skewed_in_Finance_Actuarial_science,Abudurexiti_2024_Mean_CVaR_Skewness_portfolio}
If the predictive scores of several fitted laws are close, the data do not
support treating the law with the largest point score as known. This
uncertainty is relevant for nonlinear preferences because the fitted tail
probabilities enter the decision weights.

We consider this problem under cumulative prospect theory (CPT). For a
portfolio $x$, the optimization problem is
\begin{equation}\label{eq:intro-cpt-problem}
\max_{x\in\mathcal X}
V_{\rm CPT}\!\left\{r_f-r_0+x^\top(X-r_f\bm 1)\right\},
\end{equation}
where $r_f$ is the risk-free return and $r_0$ is the reference return. CPT
weights gains and losses relative to $r_0$ separately and does not reduce to
ordinary expected utility.\cite{Tversky_Kaheman_1992_CPT}
The resulting portfolio problem is generally
nonconcave.\cite{He_and_Zhou_2011_Singel_Period_Portfolio} Existing analytical
and numerical results cover several specified return laws, including
generalized hyperbolic skewed-$t$
returns.\cite{Kwak_and_Pirvu_CPT_Skew_t,Giorgio_2019_portfolio_CPT,Eric_2024_CPT_convex_optimization}
Here the mixing law is not assumed to be known after estimation.

The paper proceeds in two steps. We first estimate GIG, inverse Gaussian,
inverse gamma, gamma, exponential, and lognormal mixing models and a grid
NPMLE. All models use the same determinant normalization, and
$m=\E(Z)$ is estimated. A paired block bootstrap of the holdout log-score
differences is used to retain the models that cannot be distinguished from
the model with the largest score. These models form a finite ambiguity set.
We then fix a common portfolio direction and use the NMVM projection to obtain
one scalar return law for each retained model. The robust exposure maximizes
the lower envelope of the corresponding CPT functions. We prove existence,
derive the candidate set formed by breakpoints, stationary points, and branch
intersections, and obtain a scaling result for changes in the reference gap.
For the finite-scenario problem, an interval branch-and-bound procedure gives
a global optimality certificate. This certificate is needed because a search
based only on sign-changing roots may omit tangential stationary points or
branch intersections.

The empirical analysis uses four univariate return series for the marginal
diagnostics and a 30-asset return vector for the multivariate comparison.
Mixture models have higher multivariate holdout log scores than the Gaussian
benchmark, but five mixing specifications remain after the block-bootstrap
comparison. Their CPT functions are evaluated on the same portfolio ray, and
the active worst-case model is found at the robust optimum. The holdout sample
used to construct the ambiguity set is not used again as an independent
performance sample; the empirical CPT values reported for that period are
descriptive.

The remainder of the paper is organized as follows. Sections~\ref{sec:model}
and~\ref{sec:estimation} give the model, identification, and estimators.
Section~\ref{sec:cpt-dro} develops the common-ray CPT problem and its global
certificate. Sections~\ref{sec:empirical-design} and~\ref{sec:results} describe
the empirical design and results. Section~\ref{sec:discussion} discusses the
scope of the findings, and Section~\ref{sec:conclusion} concludes.
 \section{The NMVM model and scale identification}\label{sec:model}

\subsection{Hierarchical representation}

\begin{definition}[Normal mean-variance mixture]
Let $N_d\sim\Normal_d(0,I_d)$ and let the nonnegative random variable $Z$ be
independent of $N_d$. For $\mu,\gamma\in\R^d$ and a positive-definite matrix
$\Sigma=AA^\top$, the random vector $X$ is said to follow a $d$-dimensional
normal mean-variance mixture if
\begin{equation}\label{eq:nmvm}
X=\mu+\gamma Z+\sqrt Z\,A N_d.
\end{equation}
\end{definition}

The conditional law and marginal density are
\begin{equation}\label{eq:conditional-normal}
X\mid Z=z\sim\Normal_d(\mu+z\gamma,z\Sigma)
\end{equation}
and
\begin{equation}\label{eq:mixture-density}
f_X(x;\theta,G)=\int_0^\infty
\phi_d(x;\mu+z\gamma,z\Sigma)\,\mathrm dG(z),
\end{equation}
respectively, where $\theta=(\mu,\gamma,\Sigma)$. Whenever
$\E(Z^2)<\infty$,
\begin{align}
\E(X)&=\mu+\gamma\E(Z),\label{eq:mean}\\
\operatorname{Cov}(X)&=\E(Z)\Sigma+\Var(Z)\gamma\gamma^\top.
\label{eq:covariance}
\end{align}
It follows from Equations~\eqref{eq:mean}--\eqref{eq:covariance} that the
moments of $X$ depend on both the structural parameters and the first two
moments of $G$.

For observations $x_1,\ldots,x_n$, a parametric mixing model has observed
log-likelihood
\begin{equation}\label{eq:parametric-loglik}
\ell(\theta,\vartheta)=\sum_{i=1}^n\log
\int_0^\infty\phi_d(x_i;\mu+z\gamma,z\Sigma)
g(z;\vartheta)\,\mathrm dz,
\end{equation}
where $\vartheta$ denotes the parameters of the selected mixing density.

\subsection{Moments and scale identification}

For any $s>0$, set $Z'=Z/s$. Then
\begin{equation}\label{eq:scale-equivalence}
\gamma Z+\sqrt Z A N_d
=(s\gamma)Z'+\sqrt{Z'}(\sqrt s A)N_d.
\end{equation}
Therefore, the representation of $(G,\gamma,\Sigma)$ is not unique unless a
scale normalization is imposed. Write
\begin{equation}\label{eq:m-definition}
m=\E_G(Z),\qquad v_Z=\Var_G(Z),
\end{equation}
so that
\begin{equation}\label{eq:moments-m}
\E(X)=\mu+m\gamma,\qquad
\operatorname{Cov}(X)=m\Sigma+v_Z\gamma\gamma^\top.
\end{equation}
The quantity $m$ is not set to one. It is estimated from the selected
parametric law or from the NPMLE support weights.

Let $S_n$ be the covariance matrix of the training sample. We remove the scale
indeterminacy by imposing
\begin{equation}\label{eq:identifiability}
|\Sigma|=c_0,\qquad c_0=|S_n|.
\end{equation}
If an unconstrained update produces $\widetilde\Sigma$, define
\begin{equation}\label{eq:det-rescale}
s=\left(\frac{c_0}{|\widetilde\Sigma|}\right)^{1/d},\qquad
\Sigma=s\widetilde\Sigma,\qquad
\gamma=s\widetilde\gamma,\qquad
Z^{\rm new}=Z^{\rm old}/s.
\end{equation}
Equation~\eqref{eq:det-rescale} does not change the law of $X$. The mixing
parameters and $m$ are transformed at the same step. We use this determinant
normalization for every fitted mixing law, so their conditional covariance
matrices have the same determinant without imposing $\E(Z)=1$.\cite{McNeil2015QRM}
 \section{Parametric and semi-parametric estimation}\label{sec:estimation}

\subsection{The GIG family and its limiting models}

The generalized inverse Gaussian (GIG) density is
\begin{equation}\label{eq:gig-density}
g_{\rm GIG}(z;\lambda,\chi,\psi)=
\frac{(\psi/\chi)^{\lambda/2}}{2K_\lambda(\sqrt{\chi\psi})}
z^{\lambda-1}\exp\left[-\frac12\left(\frac\chi z+\psi z\right)\right],
\quad z>0,
\end{equation}
where $\lambda\in\R$, $\chi>0$, and $\psi>0$. Its mean and variance are
\begin{align}
m_{\rm GIG}
&=\sqrt{\frac\chi\psi}
\frac{K_{\lambda+1}(\sqrt{\chi\psi})}
{K_\lambda(\sqrt{\chi\psi})},\label{eq:gig-mean}\\
v_{\rm GIG}
&=\frac\chi\psi\left[
\frac{K_{\lambda+2}(\sqrt{\chi\psi})}
{K_\lambda(\sqrt{\chi\psi})}
-\left\{\frac{K_{\lambda+1}(\sqrt{\chi\psi})}
{K_\lambda(\sqrt{\chi\psi})}\right\}^2
\right].\label{eq:gig-var}
\end{align}
The remaining parametric mixing densities are
\begin{align}
g_{\rm Ga}(z;k,\beta)
&=\frac{\beta^k}{\Gamma(k)}z^{k-1}e^{-\beta z},
&m&=\frac{k}{\beta},&v_Z&=\frac{k}{\beta^2},
\label{eq:gamma-density}\\
g_{\rm iG}(z;\alpha,\beta)
&=\frac{\beta^\alpha}{\Gamma(\alpha)}z^{-\alpha-1}e^{-\beta/z},
\notag\\[-1mm]
m&=\frac{\beta}{\alpha-1},
&&(\alpha>1),\notag\\
v_Z&=\frac{\beta^2}{(\alpha-1)^2(\alpha-2)},
&&(\alpha>2),
\label{eq:invgamma-density}\\
g_{\rm IG}(z;m,\kappa)
&=\sqrt{\frac{\kappa}{2\pi z^3}}
\exp\left[-\frac{\kappa(z-m)^2}{2m^2z}\right],
\notag\\[-1mm]
\E(Z)&=m,
&\Var(Z)&=\frac{m^3}{\kappa},\label{eq:ig-density}\\
g_{\rm Exp}(z;\beta)&=\beta e^{-\beta z},
&m&=\frac1\beta,&v_Z&=\frac1{\beta^2},
\label{eq:exp-density}\\
g_{\rm LN}(z;\eta,\tau)
&=\frac1{z\tau\sqrt{2\pi}}
\exp\left[-\frac{(\log z-\eta)^2}{2\tau^2}\right],
\notag\\[-1mm]
m&=e^{\eta+\tau^2/2},
&v_Z&=(e^{\tau^2}-1)e^{2\eta+\tau^2}.
\label{eq:lognormal-density}
\end{align}

The inverse-gamma parameterization is the one used in the skewed-$t$
representation. If
$Z_0\sim\operatorname{InvGamma}(\nu/2,\nu/2)$ and the determinant
normalization rescales $Z=Z_0/s$, then
\begin{equation}\label{eq:skewt-invgamma-map}
\alpha=\frac\nu2,\qquad
\beta=\frac\nu{2s},\qquad
\gamma=s\gamma_0,\qquad
\Sigma=s\Sigma_0,
\end{equation}
and $m=\beta/(\alpha-1)$. A finite first moment only requires $\alpha>1$.
For the variance comparison below, the inverse-gamma estimation is restricted
to $\alpha\geq2.05$. This restriction places the estimate above the
second-moment boundary $\alpha=2$. An estimate equal to 2.05 is therefore
reported as a constrained estimate.

Gamma is obtained from GIG as $\chi\downarrow0$, $\lambda=k$, and
$\psi=2\beta$; inverse gamma corresponds to $\psi\downarrow0$,
$\lambda=-\alpha$, and $\chi=2\beta$; inverse Gaussian corresponds to
$\lambda=-1/2$, $\chi=\kappa$, and $\psi=\kappa/m^2$; and the exponential law
is gamma with $k=1$. The lognormal mixing law is not a GIG member.

Let
\begin{equation}\label{eq:quadratic-forms-density}
\delta(x)=(x-\mu)^\top\Sigma^{-1}(x-\mu),
\qquad q=\gamma^\top\Sigma^{-1}\gamma.
\end{equation}
Substitution of Equation~\eqref{eq:gig-density} into
Equation~\eqref{eq:mixture-density}, followed by the standard Bessel integral,
gives
\begin{align}
f_{\rm GIG}(x)
={}&\frac{\exp\{(x-\mu)^\top\Sigma^{-1}\gamma\}}
{(2\pi)^{d/2}|\Sigma|^{1/2}}
\frac{(\psi/\chi)^{\lambda/2}}
{K_\lambda(\sqrt{\chi\psi})}\notag\\
&\times
\left(\frac{\chi+\delta(x)}{\psi+q}\right)^{(\lambda-d/2)/2}
K_{\lambda-d/2}
\!\left(\sqrt{\{\chi+\delta(x)\}(\psi+q)}\right).
\label{eq:gh-density}
\end{align}
The corresponding limits of Equation~\eqref{eq:gh-density} are used for the
four GIG boundary models. The lognormal mixture is evaluated numerically in
Equation~\eqref{eq:parametric-loglik}.

\begin{table}[htbp]
\centering
\caption{Mixing laws and the corresponding NMVM submodels.}
\label{tab:family-map}
\begin{tabular}{lll}
\toprule
Mixing law $G$ & Parameter restriction & Common name for $X$\\
\midrule
GIG & $(\lambda,\chi,\psi)$ & generalized hyperbolic (GH)\\
inverse Gaussian & $\lambda=-1/2$ & normal inverse Gaussian (NIG)\\
gamma & $\chi=0,\lambda>0$ & variance gamma (VG)\\
inverse gamma & $\psi=0,\lambda<0$ & skewed Student-$t$\\
exponential & gamma with $k=1$ & asymmetric Laplace\\
lognormal & outside GIG & lognormal scale mixture\\
\bottomrule
\end{tabular}
\end{table}

\subsection{EM/ECM updates for GIG mixtures}

For observation $x_i$, define
\begin{equation}\label{eq:quadratic-terms}
\delta_i=(x_i-\mu)^\top\Sigma^{-1}(x_i-\mu),
\qquad q=\gamma^\top\Sigma^{-1}\gamma.
\end{equation}
Under GIG mixing, conjugacy gives
\begin{equation}\label{eq:posterior-gig}
Z_i\mid X_i=x_i\sim
\GIG\left(\lambda-\frac d2,\chi+\delta_i,\psi+q\right).
\end{equation}
If $W\sim\GIG(\ell,u,v)$, then
\begin{align}
\E(W^r)&=\left(\frac uv\right)^{r/2}
\frac{K_{\ell+r}(\sqrt{uv})}{K_\ell(\sqrt{uv})},
\label{eq:gig-moment}\\
\E(\log W)&=\frac12\log\frac uv+
\frac{\partial}{\partial\ell}\log K_\ell(\sqrt{uv}).
\label{eq:gig-logmoment}
\end{align}
The E-step computes
\begin{equation}\label{eq:abc}
a_i=\E(Z_i\mid x_i),\qquad
b_i=\E(Z_i^{-1}\mid x_i),\qquad
c_i=\E(\log Z_i\mid x_i).
\end{equation}
Write $A_1=\sum_i a_i$, $B_1=\sum_i b_i$, $S=\sum_i x_i$, and
$T=\sum_i b_i x_i$. The part of the conditional objective involving the
structural parameters is
\begin{align}
Q_{\theta}
={}&-\frac n2\log|\Sigma|-\frac12\sum_{i=1}^n
\left[b_i(x_i-\mu)^\top\Sigma^{-1}(x_i-\mu)\right.\notag\\
&\left.\hspace{18mm}-2(x_i-\mu)^\top\Sigma^{-1}\gamma
+a_i\gamma^\top\Sigma^{-1}\gamma\right]+C,
\label{eq:structural-Q}
\end{align}
where $C$ is constant in $\theta$. Setting the derivatives with respect to
$\mu$, $\gamma$, and $\Sigma$ to zero yields
\begin{align}
\widehat\mu&=\frac{A_1T-nS}{A_1B_1-n^2},
\label{eq:mu-mv-update}\\
\widehat\gamma&=\frac{B_1S-nT}{A_1B_1-n^2},
\label{eq:gamma-mv-update}\\
\widehat\Sigma&=\frac1n\sum_{i=1}^n
\left[b_i r_i r_i^\top-r_i\widehat\gamma^\top
-\widehat\gamma r_i^\top+a_i\widehat\gamma\widehat\gamma^\top\right],
\quad r_i=x_i-\widehat\mu.
\label{eq:sigma-update}
\end{align}
The GIG parameters maximize
\begin{align}
Q_{\rm GIG}={}&n\left[\frac\lambda2(\log\psi-\log\chi)
-\log\{2K_\lambda(\sqrt{\chi\psi})\}\right]
+(\lambda-1)\sum_i c_i\notag\\
&-\frac\chi2\sum_i b_i-\frac\psi2\sum_i a_i.
\label{eq:gig-Q}
\end{align}
We perform bounded numerical maximization over
$(\lambda,\log\chi,\log\psi)$. For inverse Gaussian mixing,
$\lambda=-1/2$ is fixed and only $\chi$ and $\psi$ are updated; the reported
parameters are $\widehat\kappa=\widehat\chi$ and
$\widehat m=(\widehat\chi/\widehat\psi)^{1/2}$.

\subsection{Boundary families and lognormal mixing}

For gamma mixing, the part of the complete-data objective involving
$(k,\beta)$ is
\begin{equation}\label{eq:gamma-Q}
Q_{\rm Ga}=n\{k\log\beta-\log\Gamma(k)\}
+(k-1)\sum_i c_i-\beta\sum_i a_i.
\end{equation}
Hence
\begin{equation}\label{eq:gamma-update}
\widehat\beta=\frac{\widehat k}{\bar a},\qquad
\log\widehat\beta-\psi_0(\widehat k)+\bar c=0,
\end{equation}
where $\bar a=n^{-1}\sum_i a_i$, $\bar c=n^{-1}\sum_i c_i$, and $\psi_0$
is the digamma function. Exponential mixing fixes $k=1$, giving
$\widehat\beta=1/\bar a$. The posterior distributions needed in the E-step are
\begin{align}
Z_i\mid x_i,{\rm Gamma}
&\sim\GIG\left(k-\frac d2,\delta_i,2\beta+q\right),\notag\\
Z_i\mid x_i,{\rm Exp}
&\sim\GIG\left(1-\frac d2,\delta_i,2\beta+q\right),\notag\\
Z_i\mid x_i,{\rm IG}
&\sim\GIG\left(-\frac12-\frac d2,\kappa+\delta_i,
\frac\kappa{m^2}+q\right).
\label{eq:restricted-posteriors}
\end{align}
Their conditional moments follow directly from
Equations~\eqref{eq:gig-moment}--\eqref{eq:gig-logmoment}.

For inverse-gamma mixing,
\begin{equation}\label{eq:invgamma-posterior}
Z_i\mid x_i\sim
\GIG\left(-\alpha-\frac d2,2\beta+\delta_i,q\right),
\end{equation}
and
\begin{equation}\label{eq:invgamma-Q}
Q_{\rm iG}=n\{\alpha\log\beta-\log\Gamma(\alpha)\}
-(\alpha+1)\sum_i c_i-\beta\sum_i b_i.
\end{equation}
The update solves
\begin{equation}\label{eq:invgamma-update}
\widehat\beta=\frac{\widehat\alpha}{\bar b},\qquad
\log\widehat\beta-\psi_0(\widehat\alpha)-\bar c=0,
\qquad \widehat\alpha\geq2.05,
\end{equation}
where $\bar b=n^{-1}\sum_i b_i$. The skewed-$t$ degrees-of-freedom
parameter is recovered as $\widehat\nu=2\widehat\alpha$; the reported
$\widehat\beta$ is the scale after determinant normalization.

The lognormal posterior is not available in closed form. Let $(u_l,w_l)$,
$l=1,\ldots,L_Q$, be Gauss-Hermite nodes and weights, and set
\begin{equation}\label{eq:gh-nodes}
z_l=\exp(\eta+\sqrt2\tau u_l),\qquad \omega_l=w_l/\sqrt\pi.
\end{equation}
The quadrature responsibilities and conditional moments are
\begin{align}
\tau_{il}&=\frac{\omega_l\phi_d(x_i;\mu+z_l\gamma,z_l\Sigma)}
{\sum_h\omega_h\phi_d(x_i;\mu+z_h\gamma,z_h\Sigma)},
\label{eq:ln-responsibility}\\
a_i&=\sum_l\tau_{il}z_l,\quad
b_i=\sum_l\tau_{il}z_l^{-1},\quad
c_i=\sum_l\tau_{il}\log z_l,\quad
d_i=\sum_l\tau_{il}(\log z_l)^2.
\label{eq:ln-moments}
\end{align}
The lognormal parameter updates are
\begin{equation}\label{eq:ln-update}
\widehat\eta=\frac1n\sum_i c_i,
\qquad
\widehat\tau^2=\frac1n\sum_i d_i-\widehat\eta^2.
\end{equation}

After each structural update, Equation~\eqref{eq:det-rescale} is applied. The
corresponding changes in the mixing parameters are
\begin{align}
&(\lambda,\chi,\psi)\leftarrow(\lambda,\chi/s,s\psi),\notag\\
&(m_{\rm IG},\kappa)\leftarrow(m_{\rm IG}/s,\kappa/s),\notag\\
&(k,\beta_{\rm Ga})\leftarrow(k,s\beta_{\rm Ga}),\qquad
\beta_{\rm Exp}\leftarrow s\beta_{\rm Exp},\notag\\
&(\alpha,\beta_{\rm iG})\leftarrow(\alpha,\beta_{\rm iG}/s),\notag\\
&(\eta,\tau)\leftarrow(\eta-\log s,\tau).
\label{eq:family-rescale}
\end{align}
These transformations update $m$ but leave the observed-data density
unchanged.

\begin{algorithm}[htbp]
\caption{EM/ECM estimation for a parametric mixing law}
\begin{algorithmic}[1]
\State Select GIG, inverse Gaussian, gamma, inverse gamma, exponential, or lognormal mixing, and initialize from sample moments.
\Repeat
  \State Compute $a_i,b_i,c_i$ from the relevant posterior, or from Equations~\eqref{eq:gh-nodes}--\eqref{eq:ln-moments}.
  \State Update $\mu,\gamma,\Sigma$ by Equations~\eqref{eq:mu-mv-update}--\eqref{eq:sigma-update}.
  \State Update the mixing parameters and normalize the scale using Equations~\eqref{eq:det-rescale} and~\eqref{eq:family-rescale}.
\Until{the relative increase in observed log-likelihood is below the prescribed tolerance.}
\State Return the converged parameter estimates.
\end{algorithmic}
\end{algorithm}

\subsection{Grid NPMLE for an unknown mixing law}

The semi-parametric estimator places no continuous parametric form on $G$.
Given a positive grid $0<z_1<\cdots<z_K$, write
\begin{equation}\label{eq:discrete-G}
G_K=\sum_{j=1}^Kp_j\delta_{z_j},
\qquad p_j\geq0,\quad\sum_{j=1}^Kp_j=1.
\end{equation}
Its log-likelihood is
\begin{equation}\label{eq:npmle-loglik}
\ell_{sp}(\theta,p)=
\sum_{i=1}^n\log\left[
\sum_{j=1}^Kp_j\phi_d(x_i;\mu+z_j\gamma,z_j\Sigma)
\right].
\end{equation}
The NPMLE of a mixing distribution may be taken to have finite support.\cite{KieferWolfowitz1956,Laird1978NPMLE}
Conditional on the grid, the E-step responsibilities are
\begin{equation}\label{eq:responsibility}
\tau_{ij}=
\frac{p_j\phi_d(x_i;\mu+z_j\gamma,z_j\Sigma)}
{\sum_{h=1}^Kp_h\phi_d(x_i;\mu+z_h\gamma,z_h\Sigma)}.
\end{equation}
The M-step uses
\begin{equation}\label{eq:npmle-moments}
p_j=\frac1n\sum_i\tau_{ij},\qquad
a_i=\sum_j\tau_{ij}z_j,
\qquad b_i=\sum_j\tau_{ij}z_j^{-1},
\end{equation}
followed by Equations~\eqref{eq:mu-mv-update}--\eqref{eq:sigma-update}.
During iteration, the implementation uses the equivalent coordinate
$\sum_jp_jz_j=1$ for numerical conditioning. After termination, $s$ is computed
from Equation~\eqref{eq:det-rescale}, and the estimate is mapped to the
determinant-constrained representative by
\begin{equation}\label{eq:grid-normalize}
z_j\leftarrow z_j/s,\qquad
\gamma\leftarrow s\gamma,\qquad
\Sigma\leftarrow s\Sigma,\qquad
\widehat m=\sum_jp_jz_j,\qquad
\widehat v_Z=\sum_jp_j(z_j-\widehat m)^2.
\end{equation}
Equation~\eqref{eq:grid-normalize} leaves the observed-data density unchanged.
Thus, the mean-one coordinate is used only during iteration, and the reported
$\widehat m$ is generally different from one.
The initial grid contains $K=45$ support points, logarithmically spaced on
$[0.025,16]$. Responsibilities are evaluated in the log domain, and a support
point is called effective when $p_j>10^{-3}$.

\subsection{Parameter counts and numerical comparison}

The determinant constraint removes one degree of freedom from $\Sigma$. Let
\begin{equation}\label{eq:base-parameter-count}
p_0=2d+\frac{d(d+1)}2-1.
\end{equation}
The numbers of free parameters for GIG, inverse Gaussian, gamma, inverse
gamma, exponential, and lognormal mixing are, respectively,
$p_0+3$, $p_0+2$, $p_0+2$, $p_0+2$, $p_0+1$, and $p_0+2$.
The effective dimension of the NPMLE depends on the number of positive
support weights. We compare the parametric fits by AIC, BIC, and the holdout
mean log score. The holdout score is the principal criterion for the
semi-parametric fit.
The information criteria are
\begin{equation}\label{eq:aic-bic}
\mathrm{AIC}=-2\widehat\ell+2p,
\qquad
\mathrm{BIC}=-2\widehat\ell+p\log n,
\end{equation}
where $p$ is the number of free parameters after scale identification.
 \section{From mixing-law uncertainty to robust CPT decisions}\label{sec:cpt-dro}

After estimating $G$, $\mu$, $\gamma$, and $\Sigma$, we consider the portfolio
return relative to a reference point. We first express an arbitrary
$d$-asset portfolio by three scalar projection parameters. We then fix a
common portfolio direction and denote its exposure by $c$. Each fitted model
gives one CPT function of $c$, and the pointwise minimum of these functions is
the distributionally robust objective.

\subsection{Reference point, value function, and probability weighting}

Let $Y$ denote a one-period portfolio return relative to the reference return
$r_0$. The gain and loss value functions are
\begin{equation}\label{eq:cpt-value-functions}
u_+(y)=y^{\alpha_+},\qquad
u_-(y)=\lambda_{\ell}y^{\alpha_-},\qquad y\geq0,
\end{equation}
where $0<\alpha_+,\alpha_-\leq1$ and $\lambda_{\ell}>1$ measures loss
aversion. Probability weighting on the gain and loss sides is
\begin{equation}\label{eq:cpt-weight-functions}
w^{\pm}(p)=
\frac{p^{\delta_{\pm}}}
{\left[p^{\delta_{\pm}}+(1-p)^{\delta_{\pm}}\right]^{1/\delta_{\pm}}},
\qquad 0\leq p\leq1.
\end{equation}
If $F_Y$ is continuous, the cumulative prospect value can be written as the
Lebesgue-Stieltjes integral
\begin{align}
V_{\rm CPT}(Y)
={}&\int_0^{\infty}u_+(y)\,
\mathrm d\{-w^+[1-F_Y(y)]\}\notag\\
&-\int_{-\infty}^{0}u_-(-y)\,
\mathrm d\{w^-[F_Y(y)]\}.
\label{eq:cpt-functional}
\end{align}
The first integral uses the distorted survival probability of gains, whereas
the second uses the distorted distribution probability of losses. Both $r_f$
and $r_0$ are deterministic returns measured at the same frequency as $X$.

\subsection{Exact three-parameter representation of a portfolio}

Let $x$ be the vector of risky-asset weights, with the remaining wealth held
in the risk-free asset. The portfolio return relative to $r_0$ is
\begin{equation}\label{eq:portfolio-relative-return}
Y(x)=r_f-r_0+x^\top(X-r_f\bm1).
\end{equation}
For $\Sigma=AA^\top$, define
\begin{equation}\label{eq:whitened-directions}
y=A^\top x,\qquad
\mu_0=A^{-1}(\mu-r_f\bm1),\qquad
\gamma_0=A^{-1}\gamma.
\end{equation}
The NMVM representation gives
\begin{equation}\label{eq:portfolio-nmvm-projection}
x^\top(X-r_f\bm1)
\overset{\mathrm d}=
y^\top\mu_0+y^\top\gamma_0Z+\|y\|\sqrt Z\,N,
\qquad N\sim\Normal(0,1),\quad N\perp Z.
\end{equation}
For $x\neq0$ and $\|\mu_0\|\,\|\gamma_0\|>0$, let
\begin{equation}\label{eq:three-projection-parameters}
\rho=\|y\|=\sqrt{x^\top\Sigma x},\qquad
\phi=\frac{y^\top\gamma_0}{\rho\|\gamma_0\|},\qquad
\psi=\frac{y^\top\mu_0}{\rho\|\mu_0\|}.
\end{equation}
If $\mu_0$ or $\gamma_0$ is zero, the associated directional cosine is set to
zero and the corresponding term is omitted. Thus
\begin{equation}\label{eq:three-parameter-law}
Y(x)\overset{\mathrm d}=
r_f-r_0+\rho\left(
\|\mu_0\|\psi+\|\gamma_0\|\phi Z+\sqrt Z\,N
\right).
\end{equation}

\begin{proposition}[Exact NMVM projection of the CPT objective]
\label{prop:cpt-projection}
Let $\mathcal X$ be the feasible set of portfolio weights and suppose that
$V_{\rm CPT}$ depends on $x$ only through the law of $Y(x)$. Define
\begin{equation}\label{eq:projection-image-set}
\mathcal D_{\mathcal X}=
\left\{(\rho(x),\phi(x),\psi(x)):x\in\mathcal X,\ x\neq0\right\},
\end{equation}
with $x=0$ treated separately. Maximizing $V_{\rm CPT}\{Y(x)\}$ over
$\mathcal X$ is equivalent to maximizing the CPT value of
Equation~\eqref{eq:three-parameter-law} over $\mathcal D_{\mathcal X}$ and
comparing the result with $V_{\rm CPT}(r_f-r_0)$.
\end{proposition}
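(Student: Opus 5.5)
The argument has two steps: first show that the law of $Y(x)$ is determined by the triple $(\rho(x),\phi(x),\psi(x))$, and then transfer the maximization through the map $x\mapsto(\rho,\phi,\psi)$.

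\textbf{Step 1: the law of $Y(x)$ depends only on the triple.} Fix $x\in\mathcal X$ with $x\neq0$. By the definition of the NMVM in Equation~\eqref{eq:nmvm}, we have $x^\top(X-r_f\bm1)=x^\top(\mu-r_f\bm1)+x^\top\gamma Z+\sqrt Z\,x^\top AN_d$. Since $\Sigma$ is positive definite, $A$ is invertible, and with $y=A^\top x$ we get $x^\top(\mu-r_f\bm1)=y^\top\mu_0$ and $x^\top\gamma=y^\top\gamma_0$.

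Conditional on $Z$, the term $y^\top N_d$ is $\Normal(0,\|y\|^2)$ and independent of $Z$. Hence $y^\top N_d\overset{\mathrm d}=\|y\|N$ with $N\perp Z$, and the pair $(Z,y^\top N_d)$ has the same joint law as $(Z,\|y\|N)$. This gives Equation~\eqref{eq:portfolio-nmvm-projection}.

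Next, $\rho=\|y\|=\sqrt{x^\top\Sigma x}>0$. By the definitions of $\phi$ and $\psi$, we have $y^\top\gamma_0=\rho\|\gamma_0\|\phi$ and $y^\top\mu_0=\rho\|\mu_0\|\psi$. When $\gamma_0=0$ or $\mu_0=0$, the convention $\phi=0$ or $\psi=0$ makes the corresponding term vanish consistently. This yields Equation~\eqref{eq:three-parameter-law}. The right-hand side depends on $x$ only through $(\rho,\phi,\psi)$, and on the model only through $G$, $\|\mu_0\|$ and $\|\gamma_0\|$, which are fixed.

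\textbf{Step 2: reduce the optimization.} Write $\Phi(\rho,\phi,\psi)$ for the CPT value of the right-hand side of Equation~\eqref{eq:three-parameter-law}. By hypothesis $V_{\rm CPT}$ is law-invariant, so
\[
V_{\rm CPT}\{Y(x)\}=\Phi(\rho(x),\phi(x),\psi(x))\quad\text{for every } x\in\mathcal X\setminus\{0\}.
\]
Therefore $\sup_{x\in\mathcal X\setminus\{0\}}V_{\rm CPT}\{Y(x)\}=\sup_{\mathcal D_{\mathcal X}}\Phi$, because $\mathcal D_{\mathcal X}$ is exactly the image of $\mathcal X\setminus\{0\}$ under this map. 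Maximizers also correspond. A maximizer $x^\ast$ maps to a maximizing triple. Conversely, any maximizing triple in $\mathcal D_{\mathcal X}$ has at least one preimage $x$ in $\mathcal X\setminus\{0\}$, by the definition of the image set, and that preimage attains the same value.

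For $x=0$ (if $0\in\mathcal X$), $Y(0)=r_f-r_0$ is deterministic, so its value is $V_{\rm CPT}(r_f-r_0)$. The overall supremum over $\mathcal X$ is then the maximum of $\sup_{\mathcal D_{\mathcal X}}\Phi$ and this constant, which is the stated equivalence.

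\textbf{Main obstacle.} The only delicate point is the distributional identity in Step 1. It requires the joint law of $(Z,\,y^\top N_d)$, not just the marginal of $y^\top N_d$, since $Z$ enters both the drift and the scale. This is handled by independence of $Z$ and $N_d$ together with the rotation invariance of the standard Gaussian. A secondary point is that $V_{\rm CPT}$ must be well defined (finite) for these laws, but the statement's hypothesis that it depends only on the law of $Y(x)$ sidesteps this.
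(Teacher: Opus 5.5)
Your proposal is correct and follows the same route as the paper's proof. Like the paper, you obtain Equation~\eqref{eq:portfolio-nmvm-projection} from the Gaussian linear combination, rewrite the inner products via $\rho$, $\phi$, $\psi$, and pass the maximization through the image set $\mathcal D_{\mathcal X}$, treating $x=0$ separately; your explicit treatment of the joint law of $(Z,y^\top N_d)$ and of the correspondence between maximizers only spells out details the paper leaves implicit.
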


\begin{proof}
Equation~\eqref{eq:portfolio-nmvm-projection} follows from the distribution of
a linear combination of a Gaussian vector. Equation~\eqref{eq:three-parameter-law}
rewrites the three inner products in terms of lengths and directional cosines.
Any two portfolios with the same $(\rho,\phi,\psi)$ induce the same law of $Y$
and therefore the same CPT value. Taking the image of the feasible set proves
the claim.
\end{proof}

No concavity assumption or parametric law for $Z$ is used in the proposition.
It therefore applies to the parametric estimates and the grid NPMLE. The three
coordinates must, however, be generated by the same portfolio $x$; an
arbitrary rectangular set of values for $(\rho,\phi,\psi)$ is not equivalent
to $\mathcal D_{\mathcal X}$.

\subsection{From a model-specific fund direction to a common decision ray}

For hyperbolic returns and a continuous concave utility function, an optimal
risky-fund direction is proportional to
$\Sigma^{-1}\{\mu-r_f\bm1+\gamma\E(Z)\}$.\cite{Abudurexiti_2024_expected_utility}
CPT does not satisfy the concave expected-utility assumptions of that result.
We use this direction to define a one-dimensional feasible set. The resulting
solution is an optimum on that set, not an optimum of the unrestricted CPT
problem.

For one fitted NMVM law $\widehat P$, set
\begin{equation}\label{eq:common-fund-direction}
\widehat v=\widehat\mu-r_f\bm1+\widehat m\widehat\gamma,
\qquad
\widehat D=\widehat v^\top\widehat\Sigma^{-1}\widehat v,
\qquad
\widehat q=
\frac{\widehat\Sigma^{-1}\widehat v}{\widehat D}.
\end{equation}
Because $\widehat q^\top\widehat v=1$, the random excess return along
$x(c)=c\widehat q$ is
\begin{equation}\label{eq:eta-definition}
x(c)^\top(X-r_f\bm1)=c\eta,
\end{equation}
where
\begin{equation}\label{eq:eta-nmvm-law}
\eta=
\frac{\widehat v^\top\widehat\Sigma^{-1}
(\widehat\mu-r_f\bm1)}{\widehat D}
+\frac{\widehat v^\top\widehat\Sigma^{-1}\widehat\gamma}
{\widehat D}Z
+\frac{\sqrt Z}{\sqrt{\widehat D}}N,
\qquad \E(\eta)=1.
\end{equation}
The last equality follows from the normalization of the fund direction:
\begin{equation}\label{eq:eta-mean-normalization}
\E(\eta)
=\widehat q^\top(\widehat\mu-r_f\bm1)
+\widehat m\,\widehat q^\top\widehat\gamma
=\widehat q^\top\widehat v=1.
\end{equation}
This equality does not impose $\E(Z)=1$; it uses the fitted value
$\widehat m=\E_{\widehat G}(Z)$. The division by $\widehat D$ normalizes the
expected excess return of the fund to one. If instead
$\widetilde q=\widehat\Sigma^{-1}\widehat v=\widehat D\widehat q$ and
$\widetilde\eta=\widetilde q^\top(X-r_f\bm1)$, then
$\widetilde\eta=\widehat D\eta$ and $\E(\widetilde\eta)=\widehat D$. Hence
\begin{equation}\label{eq:fund-ray-rescaling}
c\widehat q=\widetilde c\,\widetilde q,
\qquad
\widetilde c=\frac{c}{\widehat D},
\qquad
\widetilde c\,\widetilde\eta=c\eta.
\end{equation}
Thus, the two parameterizations give the same portfolio and the same random
return. Only the units of the exposure coefficient differ.

Under the gross-exposure bound $\|x\|_1\leq L$ and $c\geq0$,
\begin{equation}\label{eq:c-upper-bound}
0\leq c\leq c_{\max},
\qquad
c_{\max}=\frac{L}{\|\widehat q\|_1}.
\end{equation}
The nominal ray problem is
\begin{equation}\label{eq:c-star-problem}
\widehat c^*\in
\arg\max_{0\leq c\leq c_{\max}}
V_{\rm CPT}(r_f-r_0+c\eta),
\qquad
\widehat x^*=\widehat c^*\widehat q.
\end{equation}
Equation~\eqref{eq:c-star-problem} is the optimization problem on the stated
ray. If the direction is also a decision variable, one must optimize over the
three-parameter image in Proposition~\ref{prop:cpt-projection}.

Suppose now that $K$ models survive the multivariate predictive comparison.
The finite ambiguity set is
\begin{equation}\label{eq:finite-ambiguity-set}
\widehat{\mathfrak P}=\{\widehat P_1,\ldots,\widehat P_K\}.
\end{equation}
To compare the candidate laws, we evaluate the same portfolio under every
law. Fix a nonzero direction $q_0$ that does not depend on $k$, and define
\begin{equation}\label{eq:robust-common-ray}
\mathcal X(q_0)=\{x(c)=cq_0:0\leq c\leq c_{\max}\},
\qquad
c_{\max}=\frac{L}{\|q_0\|_1}.
\end{equation}
The direction can be constructed from training-sample moments or from a fixed
reference model. It is not re-estimated for the individual elements of
$\widehat{\mathfrak P}$. When $K=1$, the choice $q_0=\widehat q$ gives
Equation~\eqref{eq:c-star-problem}. When $K>1$, the projected means can differ
even though $q_0$ is common.

For $X^{(k)}\sim\widehat P_k$, define
\begin{equation}\label{eq:robust-eta-branches}
\eta_k=q_0^\top(X^{(k)}-r_f\bm1),
\qquad
\Gamma_k(c)=V_{\rm CPT}^{\widehat P_k}
\!\left(r_f-r_0+c\eta_k\right).
\end{equation}
If $\widehat m_k=\E_{\widehat P_k}(Z)$, then
\begin{equation}\label{eq:robust-eta-means}
\E_{\widehat P_k}(\eta_k)
=q_0^\top\left(
\widehat\mu_k-r_f\bm1+\widehat m_k\widehat\gamma_k
\right).
\end{equation}
The equality $\E(\eta)=1$ holds when $q_0$ is the normalized direction of the
same fitted model. Equation~\eqref{eq:robust-eta-means} is generally different
from one for the other laws. Their projected means, skewness loadings, and
conditional scales enter $\Gamma_k(c)$, while $c$ refers to the same portfolio
for every $k$. Define
\begin{equation}\label{eq:dro-lower-envelope}
\underline\Gamma(c)
=\inf_{P\in\widehat{\mathfrak P}}
V_{\rm CPT}^{P}
\!\left(r_f-r_0+cq_0^\top(X-r_f\bm1)\right)
=\min_{1\leq k\leq K}\Gamma_k(c).
\end{equation}

\begin{proposition}[Robust reduction under a finite distribution set]
\label{prop:finite-distribution-dro}
Assume $0<c_{\max}<\infty$ and that every $\Gamma_k$ is finite and continuous
on $[0,c_{\max}]$. Then the distributionally robust ray problem is
\begin{equation}\label{eq:dro-c-problem}
c_{\rm DR}^*\in
\arg\max_{0\leq c\leq c_{\max}}\underline\Gamma(c)
=\arg\max_{0\leq c\leq c_{\max}}
\min_{1\leq k\leq K}\Gamma_k(c),
\qquad
x_{\rm DR}^*=c_{\rm DR}^*q_0,
\end{equation}
and its solution set is nonempty. At a fixed $c$, the set of worst-case models
is
\begin{equation}\label{eq:active-worst-distribution}
\mathcal I(c)=\arg\min_{1\leq k\leq K}\Gamma_k(c).
\end{equation}

Suppose further that each $\Gamma_k$ is continuously differentiable except at
a finite set $\mathcal B_k\subset(0,c_{\max})$, and let
$\mathcal B=\bigcup_{k=1}^K\mathcal B_k$. At least one robust optimum belongs
to
\begin{align}
\mathcal C_{\rm DR}={}&\{0,c_{\max}\}\cup\mathcal B\notag\\
&\cup\bigcup_{k=1}^K
\left\{c\in(0,c_{\max})\setminus\mathcal B:
\Gamma_k'(c)=0\right\}\notag\\
&\cup\bigcup_{1\leq i<j\leq K}
\left\{c\in(0,c_{\max}):\Gamma_i(c)=\Gamma_j(c)\right\}.
\label{eq:dro-candidate-set}
\end{align}
If the stationary points and intersections in
Equation~\eqref{eq:dro-candidate-set} are finite, the robust problem reduces to
pointwise comparison over the finite set $\mathcal C_{\rm DR}$.
\end{proposition}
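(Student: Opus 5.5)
The proof has two parts: existence, then a localisation argument for the candidate set.

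\emph{Existence.} Since each $\Gamma_k$ is finite and continuous on $[0,c_{\max}]$, the lower envelope $\underline\Gamma=\min_k\Gamma_k$ is continuous there. A minimum of finitely many continuous functions is continuous, which follows from $|\min_k a_k-\min_k b_k|\le\max_k|a_k-b_k|$. The interval $[0,c_{\max}]$ is compact because $0<c_{\max}<\infty$, so Weierstrass gives a maximiser, and the solution set in Equation~\eqref{eq:dro-c-problem} is nonempty. The identity in Equation~\eqref{eq:dro-lower-envelope}, that the infimum over $\widehat{\mathfrak P}$ equals the minimum over $k$, holds because the set is finite. The definition of $\mathcal I(c)$ needs nothing further, since a finite argmin is nonempty.

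\emph{Candidate set.} Fix a maximiser $c^*$ and suppose $c^*\notin\mathcal C_{\rm DR}$. Then $c^*\in(0,c_{\max})\setminus\mathcal B$, and no two branches coincide at $c^*$, so $\mathcal I(c^*)=\{k\}$ is a singleton. By continuity the strict inequalities $\Gamma_k(c^*)<\Gamma_j(c^*)$ for $j\ne k$ persist on a neighbourhood $U$ of $c^*$. Shrink $U$ so that it also misses the finite set $\mathcal B$. On $U$ we then have $\underline\Gamma=\Gamma_k$, and $\Gamma_k$ is $C^1$ there. So $c^*$ is an interior local maximum of a differentiable function, and Fermat's rule gives $\Gamma_k'(c^*)=0$. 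This puts $c^*$ in the stationary-point component of $\mathcal C_{\rm DR}$, which is a contradiction. Hence every maximiser, and in particular at least one, lies in $\mathcal C_{\rm DR}$.

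The only delicate step is this localisation. It needs the open-ness of the strict inequalities and the finiteness of $\mathcal B$, so that a breakpoint-free neighbourhood exists. If several branches are tied at $c^*$, the point is already in the intersection component, so no one-sided-derivative analysis is needed. That case split is what keeps the argument short.

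\emph{Finite reduction.} If the stationary and intersection sets are finite, then $\mathcal C_{\rm DR}$ is finite, because $\mathcal B$ is finite by assumption. By the previous step $\max_{[0,c_{\max}]}\underline\Gamma=\max_{c\in\mathcal C_{\rm DR}}\underline\Gamma(c)$, since the left side is attained at a point of $\mathcal C_{\rm DR}$ and the right side cannot exceed it. Any maximiser of $\underline\Gamma$ over the finite set $\mathcal C_{\rm DR}$ is therefore a robust optimum. This is the pointwise comparison claimed in the statement.
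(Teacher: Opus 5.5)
Your proposal is correct and follows the paper's proof essentially step for step. Both arguments get continuity of the lower envelope from $|\min_k a_k-\min_k b_k|\le\max_k|a_k-b_k|$ and existence from the extreme-value theorem. Both then split into three cases: an endpoint or breakpoint, a tie between branches (an intersection point), or a unique active branch, where $\underline\Gamma=\Gamma_k$ near $c^*$ and Fermat's rule gives $\Gamma_k'(c^*)=0$. Your contrapositive phrasing and your explicit finite-reduction step are only cosmetic differences, and like the paper's argument yours shows that every maximiser lies in $\mathcal C_{\rm DR}$.
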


\begin{proof}
Equation~\eqref{eq:robust-common-ray} makes $x(c)=cq_0$ the same decision under
all candidate laws, leaving only the scalar $c$. Since
$\widehat{\mathfrak P}$ is finite, for every fixed $c$,
\[
\inf_{P\in\widehat{\mathfrak P}}V_{\rm CPT}^{P}
\!\left(r_f-r_0+cq_0^\top(X-r_f\bm1)\right)
=\min_{1\leq k\leq K}\Gamma_k(c),
\]
and the minimum is attained by at least one $k$, which proves
Equation~\eqref{eq:active-worst-distribution}.

The pointwise minimum of finitely many continuous functions is continuous. In
particular, for $c_1,c_2\in[0,c_{\max}]$,
\begin{equation}\label{eq:min-continuity-bound}
|\underline\Gamma(c_1)-\underline\Gamma(c_2)|
\leq\max_{1\leq k\leq K}
|\Gamma_k(c_1)-\Gamma_k(c_2)|.
\end{equation}
The extreme-value theorem then gives existence on the compact interval.

Let $c^\dagger$ be a global maximizer. If it is an endpoint or belongs to
$\mathcal B$, it already appears in Equation~\eqref{eq:dro-candidate-set}.
Otherwise, it is an interior point at which every branch is differentiable.
If $|\mathcal I(c^\dagger)|\geq2$, two active branches agree there, so
$c^\dagger$ is a branch intersection. If
$\mathcal I(c^\dagger)=\{k^\dagger\}$, finiteness and continuity imply that
$\underline\Gamma=\Gamma_{k^\dagger}$ in a neighborhood of $c^\dagger$.
The first-order necessary condition for an interior extremum gives
$\Gamma_{k^\dagger}'(c^\dagger)=0$. Hence at least one global maximizer is in
$\mathcal C_{\rm DR}$.
\end{proof}

\begin{remark}[A verifiable sufficient condition for continuity]
\label{rem:dro-continuity-condition}
The continuity assumption in Proposition~\ref{prop:finite-distribution-dro}
can be checked from projected moments. If for each $k$ there is a $q_k$ such
that
\begin{equation}\label{eq:dro-moment-condition}
q_k>\max\left\{\frac{\alpha_+}{\delta_+},
\frac{\alpha_-}{\delta_-}\right\},
\qquad
\E_{\widehat P_k}|\eta_k|^{q_k}<\infty,
\end{equation}
and if $\widehat\Sigma_k$ is positive definite and $Z>0$, then $\Gamma_k$ is
finite and continuous on $[0,c_{\max}]$. Appendix~\ref{app:dro-continuity-proof}
gives the tail-integral and dominated-convergence argument.
\end{remark}

\begin{remark}
Proposition~\ref{prop:finite-distribution-dro} is stated for the finite set
$\widehat{\mathfrak P}$, not for its convex hull. Probability weighting makes
$V_{\rm CPT}^{P}$ nonlinear in $P$ in general. Without additional conditions,
$\inf_{P\in\operatorname{co}(\widehat{\mathfrak P})}V_{\rm CPT}^{P}$ cannot
be replaced by the minimum of its values at the listed models. Also,
$\min_k\Gamma_k(c;q_k)$ with a model-specific direction $q_k$ compares
different portfolios and is not the worst-case value of one decision.
\end{remark}

\subsection{Homogeneity of the lower envelope and reference-gap scaling}

Suppose first that $\alpha_+=\alpha_-=\alpha$. For any random variable $W$
and scalar $a>0$, positive rescaling preserves signs, ranks, and decision
weights. Hence
\begin{equation}\label{eq:cpt-positive-homogeneity}
V_{\rm CPT}(aW)=a^\alpha V_{\rm CPT}(W).
\end{equation}
In particular, if $r_0=r_f$, then for every $k$,
\begin{equation}\label{eq:cpt-homogeneity}
\Gamma_k(c)=c^\alpha\Gamma_k(1),
\qquad
\underline\Gamma(c)=c^\alpha\underline\Gamma(1),
\qquad c\geq0.
\end{equation}
It follows that the robust exposure is $c_{\max}$ if
$\underline\Gamma(1)>0$, zero if $\underline\Gamma(1)<0$, and every
$c\in[0,c_{\max}]$ is optimal if $\underline\Gamma(1)=0$.

When the reference return exceeds the risk-free return, define
\begin{equation}\label{eq:reference-gap-standardization}
h=r_0-r_f>0,\qquad t=\frac ch.
\end{equation}
For each fitted law,
\begin{equation}\label{eq:cpt-reference-gap-homogeneity}
V_{\rm CPT}^{\widehat P_k}(-h+c\eta_k)
=h^\alpha V_{\rm CPT}^{\widehat P_k}
\!\left(-1+\frac ch\eta_k\right).
\end{equation}
The factor $h^\alpha$ is common and strictly positive, so it also factors out
of the pointwise minimum.

\begin{corollary}[Scaling of robust exposure with the reference gap]
\label{cor:dro-gap-scaling}
Let $h=r_0-r_f>0$ and $\alpha_+=\alpha_-=\alpha$. Define
\begin{equation}\label{eq:standardized-drobjective}
\underline G(t)=\min_{1\leq k\leq K}
V_{\rm CPT}^{\widehat P_k}(-1+t\eta_k),
\qquad t\geq0.
\end{equation}
Then
\begin{equation}\label{eq:dro-gap-homogeneity}
\underline\Gamma_h(c)
:=\min_{1\leq k\leq K}
V_{\rm CPT}^{\widehat P_k}(-h+c\eta_k)
=h^\alpha\underline G\!\left(\frac ch\right),
\end{equation}
and consequently
\begin{equation}\label{eq:standardized-dro-c-star}
\frac{c_{\rm DR}^*(h)}h
\in\arg\max_{0\leq t\leq c_{\max}/h}\underline G(t).
\end{equation}
If $\underline G$ is nondecreasing on $[0,t_{\rm DR}^*]$,
nonincreasing on $[t_{\rm DR}^*,\infty)$, and has the unique maximizer
$t_{\rm DR}^*$, then
\begin{equation}\label{eq:dro-c-star-scaling}
c_{\rm DR}^*(h)=\min\{h t_{\rm DR}^*,c_{\max}\}.
\end{equation}
\end{corollary}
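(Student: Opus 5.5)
The plan is to derive everything from the positive homogeneity \eqref{eq:cpt-positive-homogeneity}, applied branch by branch, and then pass to the lower envelope and its argmax. First I will justify \eqref{eq:cpt-positive-homogeneity} under $\alpha_+=\alpha_-=\alpha$. For $a>0$ and $W'=aW$, we have $F_{W'}(y)=F_W(y/a)$, so the change of variables $y=ay'$ in both Lebesgue–Stieltjes integrals of \eqref{eq:cpt-functional} leaves the weighting terms $w^+[1-F_W(y')]$ and $w^-[F_W(y')]$ unchanged. The only effect is on the value functions, where $u_\pm(ay')=a^\alpha u_\pm(y')$. Since the sign of $y$ equals the sign of $y'$, the gain and loss regions are preserved.

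Next, for each $k$, I will write $-h+c\eta_k=h(-1+(c/h)\eta_k)$ and apply the homogeneity with $a=h$. This gives \eqref{eq:cpt-reference-gap-homogeneity}. Because $h^\alpha>0$ is the same for every $k$, $\min_k h^\alpha g_k=h^\alpha\min_k g_k$, and this yields \eqref{eq:dro-gap-homogeneity}.

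For \eqref{eq:standardized-dro-c-star}, the substitution $t=c/h$ is a strictly increasing bijection of $[0,c_{\max}]$ onto $[0,c_{\max}/h]$. Multiplication by the positive constant $h^\alpha$ preserves maximizers. Hence $c$ maximizes $\underline\Gamma_h$ on $[0,c_{\max}]$ if and only if $c/h$ maximizes $\underline G$ on $[0,c_{\max}/h]$. Existence of $c^*_{\rm DR}(h)$ is supplied by Proposition~\ref{prop:finite-distribution-dro} under its continuity hypothesis, or by Remark~\ref{rem:dro-continuity-condition}. In the statement, the membership should be read as holding for any selected maximizer.

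For the final formula \eqref{eq:dro-c-star-scaling}, I will split into two cases according to whether $t^*_{\rm DR}\le c_{\max}/h$.
\begin{itemize}
\item If $t^*_{\rm DR}\le c_{\max}/h$, the global unique maximizer lies in the feasible interval, so it is the maximizer on $[0,c_{\max}/h]$. Uniqueness is inherited from the global problem, and therefore $c^*_{\rm DR}(h)=ht^*_{\rm DR}$.
\item If $t^*_{\rm DR}>c_{\max}/h$, the feasible interval lies inside $[0,t^*_{\rm DR}]$, where $\underline G$ is nondecreasing. Hence $c_{\max}/h$ is a maximizer and $c_{\max}$ is a robust optimum.
\end{itemize}
Combining the two cases gives $\min\{ht^*_{\rm DR},c_{\max}\}$.

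I expect the main obstacle to be the second case. Mere monotonicity (nondecreasing rather than increasing) does not rule out a flat stretch ending at $c_{\max}/h$, so the maximizer there need not be unique. I would therefore state that the formula identifies a maximizer, the largest one, and that it is the unique maximizer whenever $\underline G$ is strictly increasing on $[0,t^*_{\rm DR}]$.

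A secondary point requiring care is that the homogeneity step needs $F_W$ to be continuous, or the general rank-dependent definition, so that the decision weights are genuinely invariant under scaling. This holds here because $Z>0$ and the Gaussian component make each $\eta_k$ absolutely continuous whenever $q_0^\top\widehat\Sigma_kq_0>0$.
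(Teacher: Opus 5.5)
Your proposal is correct and follows the paper's proof essentially step for step: branchwise homogeneity, factoring the common $h^\alpha>0$ out of the minimum, the substitution $t=c/h$, and reading off the truncated maximizer under the unimodality assumption. Your caveat is also correct: when $t^*_{\rm DR}>c_{\max}/h$, a merely nondecreasing $\underline G$ makes $c_{\max}$ only \emph{a} (the largest) robust optimum rather than the unique one, which sharpens the paper's phrase ``the maximizer of $\underline G$ over any truncated interval $[0,u]$ is $\min\{t^*_{\rm DR},u\}$''.
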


\begin{proof}
Equation~\eqref{eq:cpt-positive-homogeneity} gives, for every $k$,
\[
V_{\rm CPT}^{\widehat P_k}(-h+c\eta_k)
=h^\alpha V_{\rm CPT}^{\widehat P_k}
\!\left(-1+\frac ch\eta_k\right).
\]
Taking the minimum over $k$ proves
Equation~\eqref{eq:dro-gap-homogeneity}. The substitution $t=c/h$ changes the
feasible interval to $[0,c_{\max}/h]$ and proves
Equation~\eqref{eq:standardized-dro-c-star}. Under the stated unimodality,
the maximizer of $\underline G$ over any truncated interval $[0,u]$ is
$\min\{t_{\rm DR}^*,u\}$. Taking $u=c_{\max}/h$ and multiplying by $h$
proves Equation~\eqref{eq:dro-c-star-scaling}.
\end{proof}

When $K=1$, $\underline G$ is the standardized objective associated with
Equation~\eqref{eq:c-star-problem}. If its unique unimodal maximizer is
$t_0^*$, Equations~\eqref{eq:standardized-dro-c-star} and
\eqref{eq:dro-c-star-scaling} reduce to
\begin{equation}\label{eq:standardized-c-star}
\frac{c^*(h)}h\in
\arg\max_{0\leq t\leq c_{\max}/h}V_{\rm CPT}(-1+t\eta)
\end{equation}
and
\begin{equation}\label{eq:c-star-gap-scaling}
c^*(h)=\min\{h t_0^*,c_{\max}\}.
\end{equation}
Thus, the nominal and robust results correspond to $K=1$ and $K>1$ in the
same homogeneity argument. If $\alpha_+\neq\alpha_-$, the gain and loss terms
contain different powers of $h$, and Equation~\eqref{eq:dro-c-star-scaling}
does not follow.

\subsection{Empirical CPT values and globally certified search}

The nominal problem evaluates one empirical CPT function, whereas the robust
problem evaluates the minimum of $K$ such functions. First consider $K=1$.
Generate $M$ values of $\eta$ from the fitted $Z$ and an independent standard
normal variable. For a fixed $c$, order
$y_s(c)=r_f-r_0+c\eta_s$ as
$y_{(1)}(c)\leq\cdots\leq y_{(M)}(c)$ and let
$k=\max\{i:y_{(i)}(c)<0\}$. Under equally weighted scenarios, the loss and
gain decision weights are
\begin{align}
\pi_i^-&=w^-\!\left(\frac iM\right)
-w^-\!\left(\frac{i-1}{M}\right),
&&i=1,\ldots,k,\label{eq:loss-decision-weights}\\
\pi_i^+&=w^+\!\left(\frac{M-i+1}{M}\right)
-w^+\!\left(\frac{M-i}{M}\right),
&&i=k+1,\ldots,M.\label{eq:gain-decision-weights}
\end{align}
The empirical CPT value is
\begin{equation}\label{eq:empirical-cpt-value}
\widehat V_M(c)=
\sum_{i=k+1}^{M}\pi_i^+y_{(i)}(c)^{\alpha_+}
-\lambda_{\ell}\sum_{i=1}^{k}\pi_i^-
[-y_{(i)}(c)]^{\alpha_-}.
\end{equation}
We first evaluate the objective on a regular grid over $[0,c_{\max}]$. A
bounded scalar optimization is applied to each neighboring interval that
contains a local maximum. The results are compared with both endpoints. This
calculation does not assume differentiability when a scenario crosses the
reference point.

For model $\widehat P_k$, order the generated scenarios as
$\eta_{k,(1)}\leq\cdots\leq\eta_{k,(M)}$. For $c>0$, the order of
$r_f-r_0+c\eta_{k,(i)}$ is fixed; only the gain-loss partition changes, at
the breakpoints
\begin{equation}\label{eq:empirical-cpt-breakpoints}
\mathcal B_{k,M}=
\left\{-\frac{r_f-r_0}{\eta_{k,(i)}}:
\eta_{k,(i)}\neq0,\quad
0<-\frac{r_f-r_0}{\eta_{k,(i)}}<c_{\max}\right\}.
\end{equation}
On an open interval containing no breakpoint, write
$y_{k,(i)}(c)=r_f-r_0+c\eta_{k,(i)}$. Termwise differentiation gives
\begin{align}
\widehat\Gamma_{k,M}'(c)
={}&\alpha_+\sum_{i:y_{k,(i)}(c)>0}
\pi_{k,i}^+\eta_{k,(i)}
[y_{k,(i)}(c)]^{\alpha_+-1}\notag\\
&+\lambda_{\ell}\alpha_-
\sum_{i:y_{k,(i)}(c)<0}
\pi_{k,i}^-\eta_{k,(i)}
[-y_{k,(i)}(c)]^{\alpha_--1}.
\label{eq:empirical-cpt-derivative}
\end{align}
The finite-scenario version of the candidate set is therefore
\begin{align}
\widehat{\mathcal C}_{\rm DR,M}
={}&\{0,c_{\max}\}
\cup\bigcup_{k=1}^K\mathcal B_{k,M}\notag\\
&\cup\bigcup_{k=1}^K
\left\{c\in(0,c_{\max})\setminus
\bigcup_{j=1}^K\mathcal B_{j,M}:
\widehat\Gamma_{k,M}'(c)=0\right\}\notag\\
&\cup\bigcup_{1\leq i<j\leq K}
\left\{c\in(0,c_{\max}):
\widehat\Gamma_{i,M}(c)=\widehat\Gamma_{j,M}(c)\right\}.
\label{eq:empirical-dro-candidate-set}
\end{align}
Stationary points and intersections are sought separately within the open
intervals determined by Equation~\eqref{eq:empirical-cpt-breakpoints}.
Evaluating the lower envelope on
$\widehat{\mathcal C}_{\rm DR,M}$ gives a feasible point. This calculation is
not a global certificate because a root-finding routine that brackets only
sign changes can miss an even-multiplicity tangency.

For a fixed model $k$, let $a_0=r_f-r_0$ and define the ranked scenario
contribution
\begin{equation}\label{eq:ranked-cpt-contribution}
h_{k,i}(y)=
\begin{cases}
\pi_{k,i}^{+}y^{\alpha_+},&y\geq0,\\
-\lambda_{\ell}\pi_{k,i}^{-}(-y)^{\alpha_-},&y<0.
\end{cases}
\end{equation}
This function is continuous and nondecreasing on $\R$, and
\begin{equation}\label{eq:branch-contribution-sum}
\widehat\Gamma_{k,M}(c)
=\sum_{i=1}^{M}h_{k,i}(a_0+c\eta_{k,(i)}).
\end{equation}
For a closed interval $I=[\ell,u]\subset[0,c_{\max}]$, define
\begin{align}
L_k(I)&=\sum_{i=1}^{M}\min\bigl\{
h_{k,i}(a_0+\ell\eta_{k,(i)}),
h_{k,i}(a_0+u\eta_{k,(i)})\bigr\},
\label{eq:branch-interval-lower}\\
U_k(I)&=\sum_{i=1}^{M}\max\bigl\{
h_{k,i}(a_0+\ell\eta_{k,(i)}),
h_{k,i}(a_0+u\eta_{k,(i)})\bigr\}.
\label{eq:branch-interval-upper}
\end{align}

\begin{proposition}[Interval certificate for a finite-scenario lower envelope]
\label{prop:empirical-dro-certificate}
For every $c\in I$ and $k=1,\ldots,K$,
\begin{equation}\label{eq:branch-interval-enclosure}
L_k(I)\leq\widehat\Gamma_{k,M}(c)\leq U_k(I).
\end{equation}
Consequently, if
\begin{equation}\label{eq:lower-envelope-interval-upper}
U(I)=\min_{1\leq k\leq K}U_k(I),
\end{equation}
then
\begin{equation}\label{eq:lower-envelope-supremum-bound}
\sup_{c\in I}\min_{1\leq k\leq K}
\widehat\Gamma_{k,M}(c)\leq U(I).
\end{equation}
Let $\mathcal P$ be a finite interval partition of $[0,c_{\max}]$, let
$\widehat c$ be any feasible point, and define
\begin{equation}\label{eq:global-certificate-bounds}
\underline v=\min_k\widehat\Gamma_{k,M}(\widehat c),
\qquad
\overline v=\max_{I\in\mathcal P}U(I).
\end{equation}
If every interval with upper bound at most $\underline v$ has been discarded
and the remaining intervals cover every point not yet excluded, the true
finite-scenario optimum $v_M^*$ satisfies
\begin{equation}\label{eq:global-certificate-gap}
\underline v\leq v_M^*\leq\max\{\underline v,\overline v\}.
\end{equation}
In particular, if
$\max\{\underline v,\overline v\}-\underline v\leq\varepsilon$, then
$\widehat c$ is an $\varepsilon$-global solution.
\end{proposition}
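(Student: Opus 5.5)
The proof runs through the representation \eqref{eq:branch-contribution-sum} and one monotonicity argument, applied scenario by scenario. Fix $k$, a ranked index $i$, and $I=[\ell,u]$. The map $c\mapsto a_0+c\eta_{k,(i)}$ is affine. By \eqref{eq:ranked-cpt-contribution}, $h_{k,i}$ is continuous and nondecreasing on $\R$: both pieces are nondecreasing because $\pi^\pm_{k,i}\geq0$ (the weighting functions are increasing), and both vanish at $y=0$.

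First we record why the decision weights stay attached to ranks. For $c>0$ the order of the scenarios is fixed by the ordering of $\eta_{k,(i)}$, so the weights $\pi^\pm_{k,i}$ used at rank $i$ do not depend on $c$. At $c=0$ all outcomes coincide, so there is no reordering problem there. Hence \eqref{eq:branch-contribution-sum} holds on the whole of $[0,c_{\max}]$.

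Next, the composition $c\mapsto h_{k,i}(a_0+c\eta_{k,(i)})$ is monotone on $I$. It is nondecreasing if $\eta_{k,(i)}\ge0$ and nonincreasing otherwise. A monotone function on a closed interval takes values between its endpoint values. Therefore each summand at any $c\in I$ lies between the min and the max of its values at $\ell$ and $u$. Summing over $i$ gives \eqref{eq:branch-interval-enclosure}.

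For \eqref{eq:lower-envelope-supremum-bound}, fix $c\in I$. For every $k$ we have $\min_j\widehat\Gamma_{j,M}(c)\le\widehat\Gamma_{k,M}(c)\le U_k(I)$. Minimizing over $k$ bounds the envelope by $U(I)$, and taking the supremum over $c\in I$ finishes this part.

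For the certificate \eqref{eq:global-certificate-gap}, the lower inequality holds because $\widehat c$ is feasible, so $\underline v=\underline\Gamma_M(\widehat c)\le v_M^*$. For the upper inequality, first note that $v_M^*$ is attained. Each $\widehat\Gamma_{k,M}$ is a finite sum of continuous functions, the envelope is a minimum of these and hence continuous, and $[0,c_{\max}]$ is compact (as in Proposition~\ref{prop:finite-distribution-dro}). Let $c^*$ be a maximizer. There are two cases.

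If $c^*$ lies in a retained interval $I\in\mathcal P$, then $v_M^*\le U(I)\le\overline v$.

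If $c^*$ lies only in discarded intervals, take a discarded interval $I$ containing it. Discarded intervals satisfy $U(I)\le\underline v$, so $v_M^*\le U(I)\le\underline v$. Here I read the covering hypothesis as saying that every point of $[0,c_{\max}]$ is in a retained interval or in a discarded one. If $\overline v$ is understood as a max over retained intervals only, this case is what requires the $\max\{\underline v,\cdot\}$.

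In either case $v_M^*\le\max\{\underline v,\overline v\}$. The $\varepsilon$ statement then follows from $v_M^*-\underline\Gamma_M(\widehat c)\le\max\{\underline v,\overline v\}-\underline v\le\varepsilon$.

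The main obstacle is interpretive rather than technical. It is making precise that the decision weights are tied to ranks and are independent of $c$ along the whole interval. This is what makes \eqref{eq:branch-contribution-sum} valid even across breakpoints \eqref{eq:empirical-cpt-breakpoints}, where a scenario changes from loss to gain. The continuity of $h_{k,i}$ at $0$ is what lets a single nondecreasing $h_{k,i}$ cover both signs without a jump. A secondary issue is stating the bookkeeping of discarded versus retained intervals carefully, so that the upper bound covers all of $[0,c_{\max}]$.
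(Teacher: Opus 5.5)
Your proof is correct and follows essentially the same route as the paper: each scenario contribution is a nondecreasing $h_{k,i}$ composed with an affine map, so it is enclosed by its endpoint values; you then sum, minimize over $k$, and split the certificate bound into retained and discarded intervals. Your extra steps are sound but go beyond what the paper needs: the rank-fixed weights justify \eqref{eq:branch-contribution-sum} across breakpoints, which the paper states before the proposition, and a pointwise bound on the envelope already controls its supremum, so attainment of $v_M^*$ by compactness is not required.
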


\begin{proof}
For fixed $i$, the affine map $a_0+c\eta_{k,(i)}$ attains its minimum and
maximum over $I$ at its endpoints. The derivative of
$h_{k,i}$ is nonnegative on either side of zero, and $h_{k,i}$ is continuous
at zero. The range of each scenario contribution is therefore enclosed by its
two endpoint values. Summing the enclosures proves
Equation~\eqref{eq:branch-interval-enclosure}. For every $c\in I$,
\[
\min_k\widehat\Gamma_{k,M}(c)
\leq\min_k U_k(I)=U(I),
\]
and taking the supremum over $I$ proves
Equation~\eqref{eq:lower-envelope-supremum-bound}.

The value $\underline v$ is attained at a feasible point, so it cannot exceed
$v_M^*$. The undiscarded intervals cover every point that could improve
$\underline v$, and the objective over each such interval does not exceed its
$U(I)$. Every discarded interval has upper bound at most $\underline v$.
Therefore $v_M^*$ is bounded above as in
Equation~\eqref{eq:global-certificate-gap}. The stopping condition yields
$0\leq v_M^*-\underline v\leq\varepsilon$.
\end{proof}

The implementation uses the candidate search in
Equation~\eqref{eq:empirical-dro-candidate-set} to obtain $\widehat c$, then
initializes $\mathcal P$ by equal-width intervals, repeatedly bisects the
interval with the largest $U(I)$, and updates the incumbent lower bound. The
certificate does not rely on sign changes at stationary points or branch
contacts. It certifies the discrete objective for the specified $M$ scenarios;
scenario-compression error and parameter-estimation error remain separate
statistical errors.

The predictive comparison determines $\widehat{\mathfrak P}$, and the
training sample determines $q_0$. Each $\widehat P_k$ then determines the law
of $\eta_k$ and the function $\widehat\Gamma_{k,M}$. The robust exposure is
obtained from their certified lower envelope.
 \section{Empirical design}\label{sec:empirical-design}

\subsection{Data and temporal split}

The marginal analysis uses daily log returns for AAPL, AMZN, GOOGL, and MSFT.
The multivariate analysis uses adjusted closing prices for 30 stocks downloaded
from Yahoo Finance. Both panels run from 2 January 2020 through 1 June 2026 and
contain 1,611 common trading days. Log differencing produces 1,610 observations
for each marginal series and for the 30-dimensional return vector. The tickers
are MMM, AXP, AMGN, AMZN, AAPL, BA,
CAT, CVX, CSCO, KO, DIS, GS, HD, HON, IBM, JNJ, JPM, MCD, MRK, MSFT, NKE,
NVDA, PG, CRM, SHW, TRV, UNH, VZ, V, and WMT. Returns are multiplied by 100,
so reported location, skewness-loading, and return quantities are in daily
percentage units.

All four marginal series have positive excess kurtosis. The sample skewness is
negative for AMZN, GOOGL, and MSFT and close to zero for AAPL. We retain the
skewness loading $\gamma$ in every fitted marginal model.

\begin{table}[htbp]
\centering
\caption{Descriptive statistics for daily log returns (percentage units).}
\label{tab:descriptive}
\begin{tabular}{lrrrrrr}
\toprule
Asset & Mean & Std. dev. & Skewness & Excess kurtosis & Minimum & Maximum\\
\midrule
AAPL  & 0.090 & 1.971 &  0.021 &  9.468 & -13.771 & 14.262\\
AMZN  & 0.063 & 2.227 & -0.074 &  7.230 & -15.140 & 12.695\\
GOOGL & 0.106 & 2.035 & -0.096 &  6.772 & -12.368 &  9.735\\
MSFT  & 0.069 & 1.875 & -0.246 & 10.689 & -15.945 & 13.293\\
\bottomrule
\end{tabular}
\end{table}

Both datasets are split chronologically, without random shuffling: the first
70\% is used for training and the remaining 30\% for evaluation. The marginal
and multivariate training samples each contain 1,127 observations, and their
holdout samples each contain 483 observations. A univariate Gaussian is the marginal
benchmark, and a multivariate Gaussian with an unrestricted covariance matrix
is the joint benchmark. Gamma mixing and the NPMLE are fitted from multiple
starting values; the other parametric fits are initialized from sample
moments.

\subsection{Predictive criteria and ambiguity-set construction}

The primary predictive criterion is the mean holdout log score
\begin{equation}\label{eq:log-score}
\overline{LS}=n_{\rm te}^{-1}
\sum_{i\in\mathcal I_{\rm te}}\log\widehat f(x_i),
\end{equation}
where a larger value indicates better density prediction. For a marginal
distribution, the Kolmogorov-Smirnov distance
\begin{equation}\label{eq:ks}
D_{\rm KS}=\sup_x|F_{\rm te}(x)-\widehat F(x)|
\end{equation}
is used as an additional global diagnostic. The holdout log score is the main
selection criterion. Tail quantile errors and $D_{\rm KS}$ are reported as
additional diagnostics.

A point ranking does not account for sampling variation in the
30-dimensional log-score differences.
Let $\ell_{k,t}$ be the log score of model $k$ at holdout date $t$, and let
$b$ index the model with the largest holdout mean. For the paired difference
\begin{equation}\label{eq:paired-logscore-difference}
d_{k,t}=\ell_{b,t}-\ell_{k,t},
\end{equation}
we use a circular block bootstrap with block length five and 4,000 replications.
Model $k$ is retained in Equation~\eqref{eq:finite-ambiguity-set} if the 95\%
bootstrap interval for $\E(d_{k,t})$ contains zero. We also calculate energy
scores, variogram scores, fixed-projection CRPS, and projection PIT
diagnostics. These additional scores are not used to modify the ambiguity set
after calculating the CPT decisions.

\subsection{CPT and numerical settings}

The annual risk-free rate is 1.25\%. An annual rate $a$ is converted to a
daily percentage return by
\begin{equation}\label{eq:annual-daily-rate}
r_{\rm day}(a)=100\left\{(1+a)^{1/252}-1\right\}.
\end{equation}
The gross risky-asset exposure is bounded by $L=1$. We use the benchmark CPT
parameters reported by Tversky and Kahneman:\cite{Tversky_Kaheman_1992_CPT}
\begin{equation}\label{eq:cpt-empirical-parameters}
\alpha_+=\alpha_-=0.88,
\qquad
\lambda_{\ell}=2.25,
\qquad
\delta_+=0.61,
\qquad
\delta_-=0.69.
\end{equation}
The annual reference return is set to 0\%, 5\%, or 10\%; a continuous
sensitivity calculation covers 0\%--15\%.

The common direction is constructed from the 30-dimensional training mean
$\overline X_{\rm tr}$ and covariance $S_{\rm tr}$:
\begin{equation}\label{eq:empirical-common-direction}
\overline v_{\rm tr}=\overline X_{\rm tr}-r_f\bm1,
\qquad
q_0=\frac{S_{\rm tr}^{-1}\overline v_{\rm tr}}
{\overline v_{\rm tr}^{\top}S_{\rm tr}^{-1}\overline v_{\rm tr}},
\qquad
c_{\max}=\frac1{\|q_0\|_1}.
\end{equation}
This direction is determined entirely by training-sample moments and does not
change with the fitted mixing family.

For each model in the ambiguity set, we transform the same two-dimensional
scrambled Sobol sequence into draws of $Z$ and $N$. We first generate $2^{17}$
projected scenarios and then retain 1,024 equally weighted quantile scenarios
for $\widehat\Gamma_{k,M}(c)$. The semianalytic search solves for the
reference-point breakpoints, branch stationary points, and pairwise branch
intersections in Equation~\eqref{eq:empirical-dro-candidate-set}. Its best
feasible point initializes the interval branch-and-bound procedure in
Proposition~\ref{prop:empirical-dro-certificate}. The certificate starts from
1,201 intervals and terminates when the difference between the global upper
and lower bounds is at most $10^{-10}$. Outward-rounding safeguards are added
to every interval sum and mean.

For any branch or robust method $\mathcal M$, let
$\widehat x_{\mathcal M}^*=\widehat c_{\mathcal M}^*q_0$. The empirical CPT
value over the holdout period is computed from
\begin{equation}\label{eq:test-cpt-outcome}
Y_{t,\mathcal M}^{\rm te}=r_f-r_0+
\widehat x_{\mathcal M}^{*\top}(X_t-r_f\bm1),
\qquad t\in\mathcal I_{\rm te},
\end{equation}
using Equations~\eqref{eq:loss-decision-weights}--\eqref{eq:empirical-cpt-value}.
The candidate-model parameters and $q_0$ are estimated from the training
sample. Since the holdout log scores determine membership in the ambiguity
set, CPT values calculated over the same holdout dates are descriptive rather
than an independent out-of-sample evaluation.
 \section{Empirical results}\label{sec:results}

\subsection{Marginal comparison of parametric mixing laws}

Table~\ref{tab:marginal-family} averages the results of the six parametric
mixing specifications over the four marginal return series. GIG gives the
largest training score, whereas lognormal mixing gives the largest holdout
score and the smallest AIC and BIC. Thus, the training score and the other
three criteria do not give the same ranking. We report inverse-gamma mixing in
the detailed comparison because it has the largest 30-dimensional holdout
score. The NPMLE is used as the semi-parametric comparator. The ambiguity set
in Table~\ref{tab:mv-ambiguity-set} is constructed from the multivariate
scores and does not depend on this choice of presentation.

\begin{table}[H]
\centering
\caption{Marginal comparison of six parametric mixing laws. Scores and runtime
are averaged over four assets; AIC and BIC are summed.}
\label{tab:marginal-family}
\begin{tabular}{lrrrrr}
\toprule
Mixing law & Train score & Holdout score & AIC & BIC & Time (s)\\
\midrule
GIG              & \textbf{-2.0848} & -1.9457 & 18836.9 & 18937.4 & 0.271\\
inverse Gaussian & -2.0855 & -1.9455 & 18835.0 & 18915.5 & 0.074\\
inverse gamma    & -2.0859 & -1.9456 & 18838.0 & 18918.5 & 0.064\\
gamma            & -2.0886 & -1.9503 & 18862.4 & 18942.8 & 0.314\\
exponential      & -2.0906 & -1.9507 & 18873.2 & 18933.5 & 0.014\\
lognormal        & -2.0853 & \textbf{-1.9452} & \textbf{18833.0} & \textbf{18913.5} & 0.073\\
\bottomrule
\end{tabular}
\end{table}

\begin{table}[H]
\centering
\footnotesize
\setlength{\tabcolsep}{2.8pt}
\caption{Marginal parameter estimates under inverse-gamma mixing and the NPMLE.}
\label{tab:parameters}
\begin{tabular}{llrrrrrrrr}
\toprule
Asset & Method & $\widehat\mu$ & $\widehat\gamma$ & $\widehat\sigma$ & $\widehat m$
& $\widehat\alpha$ & $\widehat\beta$ & $\widehat\nu$ & Supports\\
\midrule
\multirow{2}{*}{AAPL}  & inverse gamma & 0.146 & -0.052 & 2.058 & 0.977 & 2.050 & 1.025 & 4.100 & --\\
                       & NPMLE          & 0.078 &  0.017 & 2.058 & 1.001 & --    & --    & --    & 22\\
\multirow{2}{*}{AMZN}  & inverse gamma & 0.092 & -0.028 & 2.297 & 1.032 & 2.063 & 1.097 & 4.127 & --\\
                       & NPMLE          & 0.104 & -0.041 & 2.297 & 0.998 & --    & --    & --    & 21\\
\multirow{2}{*}{GOOGL} & inverse gamma & 0.234 & -0.148 & 2.080 & 0.988 & 2.050 & 1.038 & 4.100 & --\\
                       & NPMLE          & 0.274 & -0.188 & 2.080 & 0.991 & --    & --    & --    & 35\\
\multirow{2}{*}{MSFT}  & inverse gamma & 0.196 & -0.103 & 1.980 & 0.971 & 2.050 & 1.020 & 4.100 & --\\
                       & NPMLE          & 0.132 & -0.036 & 1.980 & 1.000 & --    & --    & --    & 34\\
\bottomrule
\end{tabular}
\end{table}

Table~\ref{tab:parameters} also reports the skewed-$t$ parameter
$\widehat\nu=2\widehat\alpha$. The inverse-gamma estimates for AAPL, GOOGL,
and MSFT are equal to the imposed lower bound $\widehat\alpha=2.05$ and are
therefore constrained estimates. The AMZN estimate is above the bound. For
AAPL, the estimated inverse-gamma loading is negative, whereas the NPMLE
loading is slightly positive. The other three NPMLE loadings are negative.

\begin{figure}[H]
\centering
\includegraphics[width=0.98\textwidth]{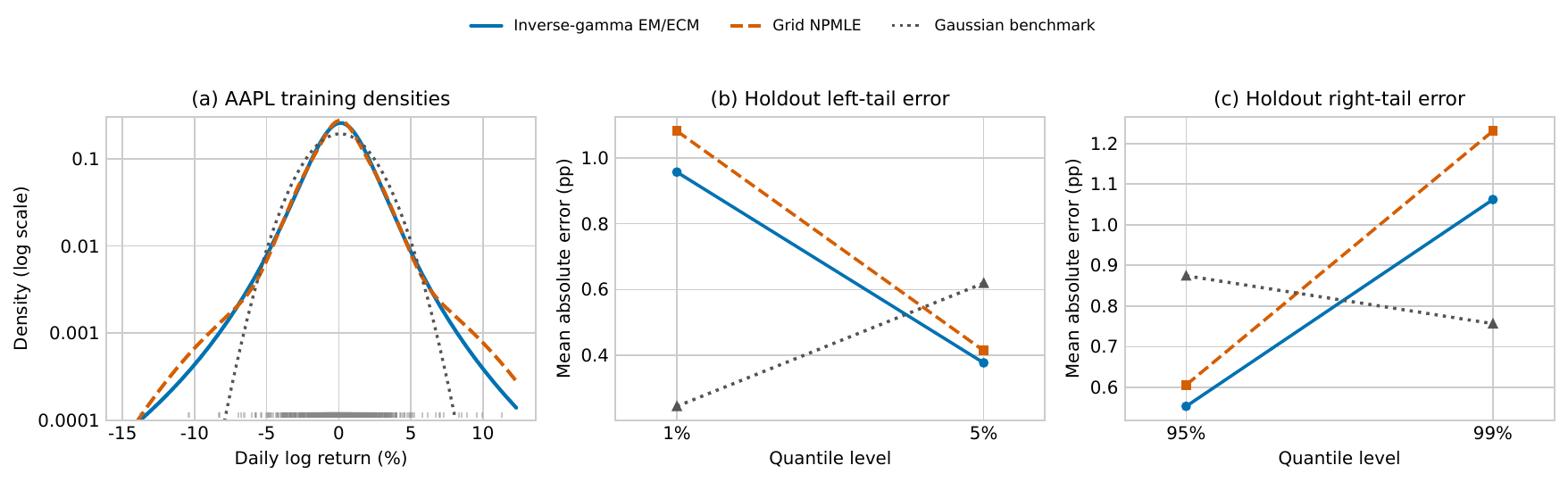}
\caption{Training log densities and holdout tail-quantile errors. Panel (a)
shows three fitted densities for AAPL on a logarithmic vertical scale. Panels
(b) and (c) report mean absolute quantile errors over the four assets for the
left and right tails separately.}
\label{fig:aapl-density}
\end{figure}

\begin{figure}[H]
\centering
\includegraphics[width=0.98\textwidth]{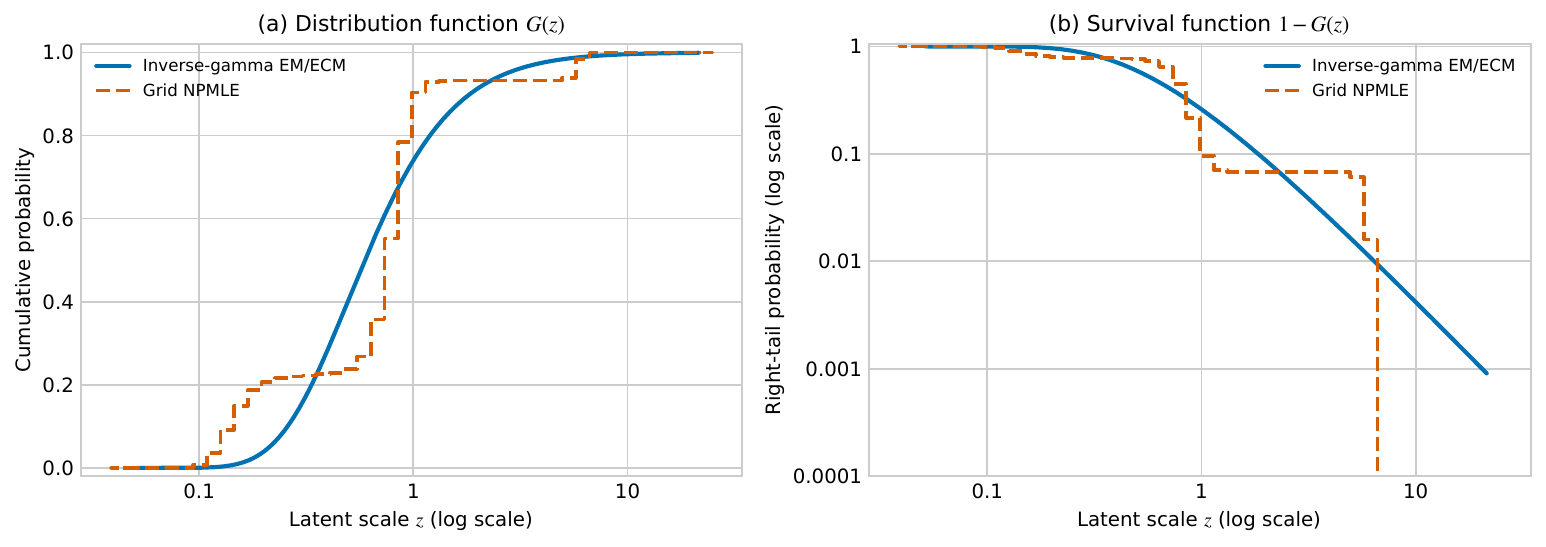}
\caption{Parametric and semi-parametric estimates of the AAPL latent-scale law.
The panels show the distribution and survival functions. Both horizontal axes
are logarithmic, and the survival probability is also plotted on a logarithmic
scale.}
\label{fig:aapl-mixing}
\end{figure}

In Figure~\ref{fig:aapl-density}(a), the inverse-gamma and NPMLE densities are
close over the central range and decay more slowly than the Gaussian density.
The tail panels give a less uniform ranking. Inverse gamma has the smallest
average error at the 5\% and 95\% quantiles, but its error exceeds that of the
Gaussian model at the 1\% left-tail quantile, and it is not the best model at
the 99\% quantile. Hence the holdout log score and the four tail-quantile
errors should be read as separate diagnostics.

Figure~\ref{fig:aapl-mixing} compares the estimated distribution and survival
functions of the latent scale. The inverse-gamma distribution has unbounded
support, whereas the fitted grid distribution has no mass above its largest
support point. The difference between the mixing distributions is much larger
than the difference between their fitted central return densities in
Figure~\ref{fig:aapl-density}(a).

\subsection{Marginal holdout log scores}

\begin{table}[H]
\centering
\caption{Mean holdout log score by asset.}
\label{tab:oos}
\begin{tabular}{lrrr}
\toprule
Asset & Inverse gamma & NPMLE & Gaussian\\
\midrule
AAPL  & -1.873 & \textbf{-1.865} & -2.001\\
AMZN  & \textbf{-2.079} & -2.081 & -2.150\\
GOOGL & \textbf{-2.020} & -2.025 & -2.079\\
MSFT  & -1.811 & \textbf{-1.807} & -1.930\\
\midrule
Mean  & -1.946 & \textbf{-1.944} & -2.040\\
\bottomrule
\end{tabular}
\end{table}

Inverse-gamma mixing has the larger holdout score for AMZN and GOOGL, and the
NPMLE has the larger score for AAPL and MSFT. The four-asset average is largest
for the NPMLE, followed by inverse gamma and the Gaussian model. All parameters
in this table are estimated from the training sample.

\subsection{Thirty-dimensional parametric and semi-parametric fits}

The joint models contain a location vector, a skewness-loading vector, and a
positive-definite conditional covariance matrix. The GIG estimate in
Table~\ref{tab:mv-family} has $\widehat\psi\approx0$, and its training and
holdout scores are almost equal to those of inverse-gamma mixing. The fitted
GIG model is therefore close to the inverse-gamma boundary. The holdout-score
difference is $1.9\times10^{-5}$. Inverse gamma uses one fewer parameter and
has the smaller AIC and BIC. Gamma and exponential mixing have the two lowest
holdout scores among the parametric mixtures.

\begin{table}[H]
\centering
\small
\setlength{\tabcolsep}{4pt}
\caption{Comparison of 30-dimensional parametric mixing laws.}
\label{tab:mv-family}
\begin{tabular}{lrrrrr}
\toprule
Mixing law & Train score & Holdout score & AIC & BIC & $\widehat m$\\
\midrule
GIG              & -48.10645 & -52.53147 & 109485.9 & 112135.3 & 1.008\\
inverse Gaussian & -48.13957 & -52.53853 & 109558.6 & 112203.0 & 1.001\\
inverse gamma    & -48.10645 & \textbf{-52.53145} & \textbf{109483.9} & \textbf{112128.3} & 1.007\\
gamma            & -48.22706 & -52.62142 & 109755.8 & 112400.2 & 0.978\\
exponential      & -48.33818 & -52.68996 & 110004.2 & 112643.6 & 1.019\\
lognormal        & -48.13285 & -52.54160 & 109543.4 & 112187.8 & 0.965\\
\bottomrule
\end{tabular}
\end{table}

\begin{table}[H]
\centering
\caption{Thirty-dimensional inverse-gamma, NPMLE, and Gaussian fits.}
\label{tab:mv-score}
\begin{tabular}{lrrrr}
\toprule
Method & Train score & Holdout score & $\widehat m$ & Effective supports\\
\midrule
inverse-gamma EM/ECM & -48.10645 & \textbf{-52.53145} & 1.007 & --\\
grid NPMLE            & \textbf{-48.09088} & -52.53980 & 1.025 & 24\\
multivariate Gaussian & -51.26574 & -56.83583 & -- & --\\
\bottomrule
\end{tabular}
\end{table}

For inverse-gamma mixing, $\widehat\alpha=2.572$ and
$\widehat\beta=1.583$. The NPMLE has the larger training score, whereas
inverse gamma has the larger holdout score. Their holdout-score difference is
small relative to the difference between either mixture model and the
multivariate Gaussian model. Table~\ref{tab:mv-vector-appendix} reports the
30-dimensional vector estimates.

\begin{table}[H]
\centering
\small
\setlength{\tabcolsep}{4pt}
\caption{Block-bootstrap log-score differences and the 30-dimensional ambiguity set.}
\label{tab:mv-ambiguity-set}
\begin{tabular}{lrrrrc}
\toprule
Model & Holdout score & $\overline d_k$ & 95\% lower & 95\% upper & Retained\\
\midrule
GIG              & -52.53147 & 0.00002 & -0.00016 & 0.00018 & yes\\
inverse Gaussian & -52.53853 & 0.00708 & -0.01774 & 0.03365 & yes\\
inverse gamma    & -52.53145 & 0       & 0        & 0       & yes\\
lognormal        & -52.54160 & 0.01015 & -0.01332 & 0.03661 & yes\\
NPMLE            & -52.53980 & 0.00835 & -0.00398 & 0.02023 & yes\\
\midrule
gamma            & -52.62142 & 0.08996 & 0.02959 & 0.15821 & no\\
exponential      & -52.68996 & 0.15851 & 0.10904 & 0.20505 & no\\
multivariate Gaussian & -56.83583 & 4.30438 & 3.07520 & 5.63994 & no\\
\bottomrule
\end{tabular}
\end{table}

In Table~\ref{tab:mv-ambiguity-set}, $\overline d_k$ is the sample mean in
Equation~\eqref{eq:paired-logscore-difference}. Inverse gamma has the largest
holdout score. The intervals for GIG, inverse Gaussian, lognormal, and the
NPMLE contain zero, and the resulting ambiguity set is
\begin{equation}\label{eq:empirical-ambiguity-set}
\widehat{\mathfrak P}=
\{\text{GIG},\text{inverse Gaussian},\text{inverse gamma},
\text{lognormal},\text{NPMLE}\}.
\end{equation}
The intervals for gamma, exponential, and the multivariate Gaussian model are
strictly positive, so these models are not included in
$\widehat{\mathfrak P}$.

\begin{table}[H]
\centering
\small
\setlength{\tabcolsep}{4pt}
\caption{Conditional covariance and latent-scale diagnostics.}
\label{tab:mv-covdiag}
\begin{tabular}{lrrrrrr}
\toprule
Method & $\lambda_{\min}$ & $\lambda_{\max}$ & Cond. no. & Mean corr. & $\widehat m$ & $\widehat{\Var}(Z)$\\
\midrule
inverse-gamma EM/ECM & 0.428 & 44.206 & 103.222 & 0.362 & 1.007 & 1.775\\
grid NPMLE           & 0.429 & 44.124 & 102.875 & 0.362 & 1.025 & 1.684\\
\bottomrule
\end{tabular}
\end{table}

The eigenvalues, condition numbers, and mean correlations in
Table~\ref{tab:mv-covdiag} are close under the two estimators.
Figure~\ref{fig:dow30-corr} reports the inverse-gamma correlation matrix and
the difference $R_{\rm NPMLE}-R_{\rm iG}$. The largest absolute off-diagonal
difference is 0.0049, and the root mean squared difference is 0.0015. Hence
the fitted conditional correlation matrices are also close, despite the
different specifications of $G$.

\begin{figure}[H]
\centering
\includegraphics[width=0.98\textwidth]{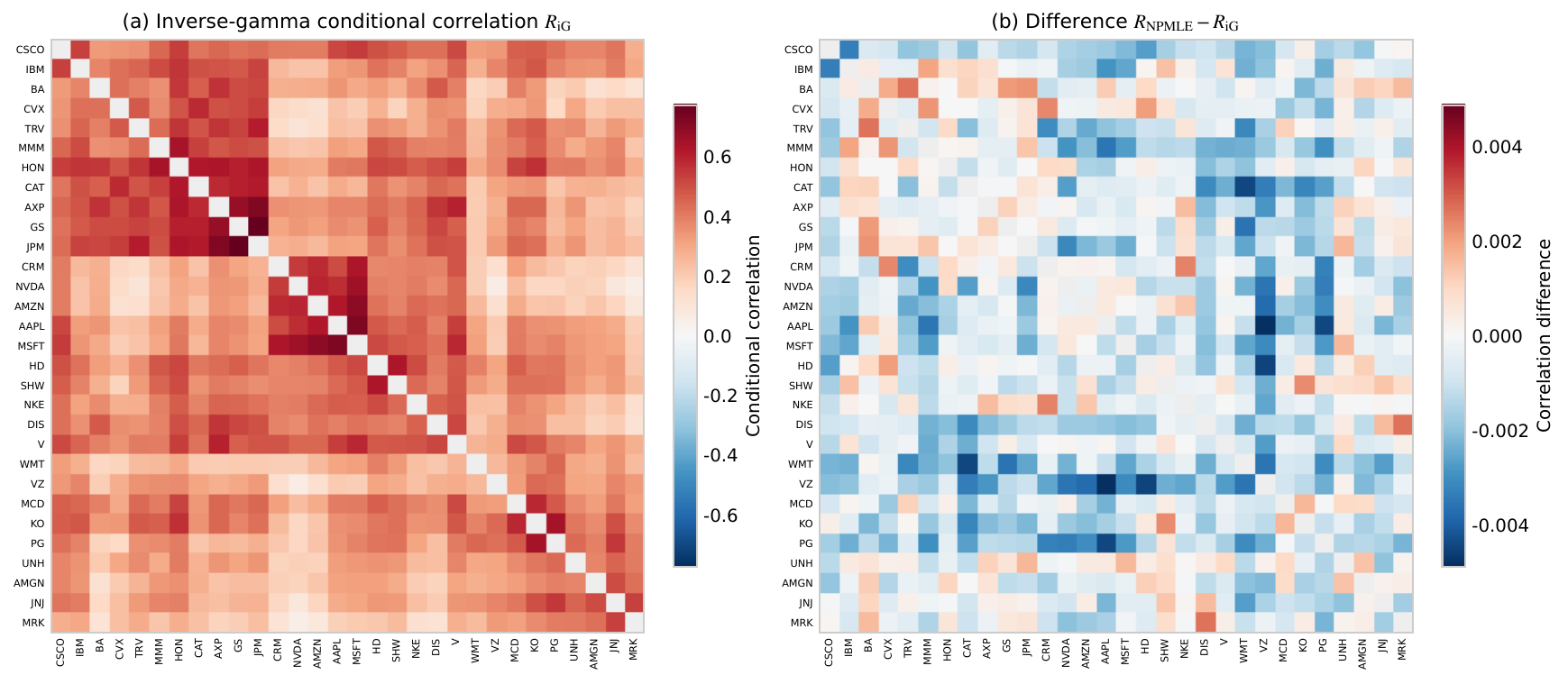}
\caption{Thirty-dimensional conditional correlation and estimation
difference. Assets are ordered by hierarchical clustering of the average
distance across the two estimated correlation matrices; diagonal entries are
left blank.}
\label{fig:dow30-corr}
\end{figure}

\subsection{Nominal and robust exposure on the common ray}

Table~\ref{tab:cpt-exposure} reports the solution for each retained model and
for the robust lower envelope. The direction $q_0$ and
$c_{\max}=0.045458$ are common to all rows. Since $L=1$,
$\|\widehat x^*\|_1=\widehat c^*/c_{\max}$. Gross risky-asset weights are
reported in percent; $c$ remains in daily percentage-return units.

\begin{table}[H]
\centering
\small
\setlength{\tabcolsep}{5pt}
\caption{Nominal and distributionally robust exposure on the common ray.}
\label{tab:cpt-exposure}
\begin{tabular}{lrrrrr}
\toprule
Method & $\widehat c^*(0\%)$ & $\widehat c^*(5\%)$
& Weight (5\%) & $\widehat c^*(10\%)$ & Weight (10\%)\\
\midrule
GIG              & 0 & 0.001964 & 4.320 & 0.004476 & 9.848\\
inverse Gaussian & 0 & 0.001942 & 4.272 & 0.004427 & 9.738\\
inverse gamma    & 0 & 0.001972 & 4.337 & 0.004477 & 9.849\\
lognormal        & 0 & 0.001977 & 4.349 & 0.004506 & 9.912\\
NPMLE            & 0 & 0.001948 & 4.286 & 0.004441 & 9.769\\
finite-set DRO   & 0 & \textbf{0.001948} & \textbf{4.286}
                       & \textbf{0.004441} & \textbf{9.769}\\
\bottomrule
\end{tabular}
\end{table}

For the 0\% annual reference return, every model gives $c^*=0$. For the 5\%
and 10\% references, the solutions are interior. The gross risky exposures are
approximately 4.3\% and 9.8\%, respectively. At both positive reference gaps,
the NPMLE is the unique active model in
Equation~\eqref{eq:active-worst-distribution}. Consequently, the robust and
NPMLE exposures are numerically equal. The interval
branch-and-bound gaps for the 0\%, 5\%, and 10\% references are
$9.20\times10^{-11}$, $1.00\times10^{-10}$, and
$9.99\times10^{-11}$, respectively.

At the 5\% annual reference, the model-based lower-envelope value at the robust
optimum is $-0.04478$, while the empirical holdout CPT value at the same
weights is $-0.04867$. At the 10\% reference, the corresponding values are
$-0.09245$ and $-0.10049$. The prospect values remain negative at both
positive reference returns. Since the holdout sample is used to construct the
ambiguity set, the empirical holdout values in this paragraph are not
independent performance estimates.

\begin{figure}[H]
\centering
\includegraphics[width=0.98\textwidth]{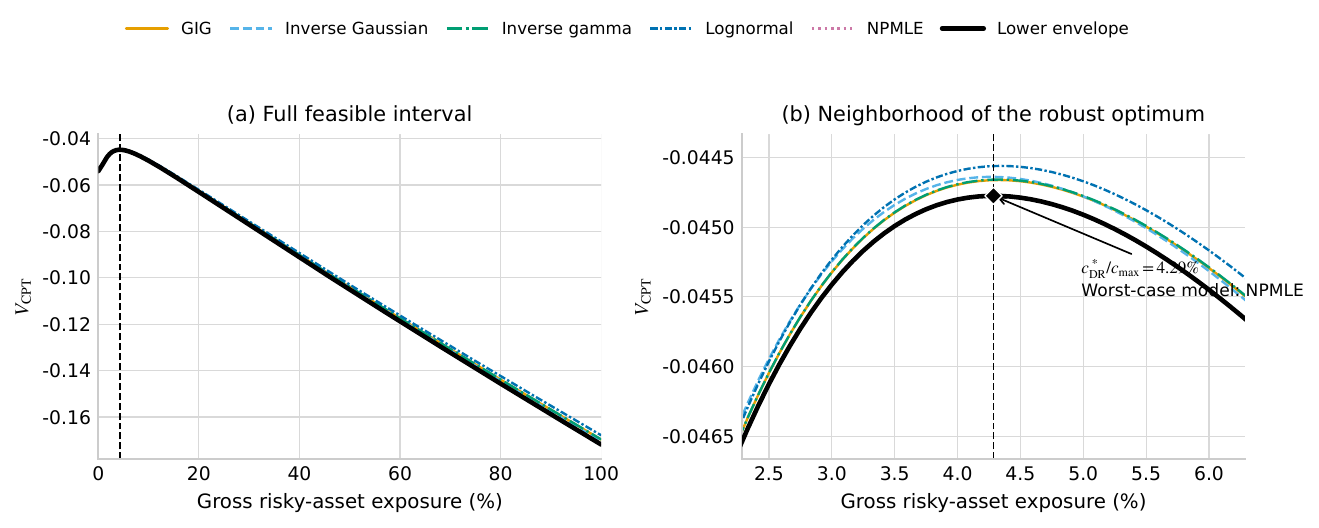}
\caption{Candidate-model CPT branches and their robust lower envelope at a 5\%
annual reference return. The left panel shows the full leverage interval; the
right panel enlarges the neighborhood of the robust optimum. The horizontal
axis is gross risky-asset exposure. The black curve is the pointwise minimum of
the five model-specific branches.}
\label{fig:cpt-objective}
\end{figure}

Figure~\ref{fig:cpt-objective} plots the five model-specific CPT functions and
their pointwise minimum. The functions are close over the feasible interval.
In a neighborhood of the optimum, the NPMLE function is the minimum and is
therefore the active function in the robust objective.
 \section{Discussion}\label{sec:discussion}

The marginal results give different rankings under the training score, the
holdout score, AIC, BIC, and the tail-quantile errors. In 30 dimensions, the
GIG and inverse-gamma likelihoods are almost equal. The block-bootstrap
intervals for inverse Gaussian, lognormal, and the NPMLE also contain zero when
they are compared with inverse gamma. We therefore retain these five models
rather than selecting inverse gamma only because it has the largest point
estimate of the holdout score.

The estimation and decision steps are summarized by
\begin{equation}\label{eq:estimation-decision-chain}
\{(\widehat G_k,\widehat\mu_k,\widehat\gamma_k,
\widehat\Sigma_k)\}_{k=1}^{K}
\longrightarrow q_0
\longrightarrow\{\widehat\eta_k,\widehat\Gamma_k(c)\}_{k=1}^{K}
\longrightarrow\min_{1\leq k\leq K}\widehat\Gamma_k(c)
\longrightarrow\widehat c_{\rm DR}^*.
\end{equation}
The direction $q_0$ is estimated once from the training mean and covariance.
Every fitted law is then evaluated at the portfolio $cq_0$. Since
$\widehat m_k$, $\widehat\mu_k$, $\widehat\gamma_k$, and
$\widehat\Sigma_k$ depend on $k$, the projected mean
$\E_{\widehat P_k}(\eta_k)$ is generally model dependent. The equality
$\E(\eta)=1$ applies only to a model's own normalized fund direction and is
not imposed on $Z$ or on the robust projections.

When $\alpha_+=\alpha_-$, the reference gap factors out of every CPT function
and their pointwise minimum. The mixing law remains in
$V_{\rm CPT}^{\widehat P_k}(-1+t\eta_k)$ and determines which function is
active. Equation~\eqref{eq:dro-c-star-scaling} requires the stated common
curvature and unimodality conditions. It can also fail when the exposure bound
truncates the standardized optimum.

The following limitations remain. The grid NPMLE depends on its specified
support range. The ambiguity set contains the fitted families only and is not
a Wasserstein ball or a moment class. Uncertainty in the estimated direction
$q_0$ is not included. A single common latent scale cannot separately describe
market, sector, and idiosyncratic scale shocks. The CPT parameters are fixed at
published experimental estimates. Finally, the interval certificate applies
to the compressed finite-scenario objective; it does not bound scenario
approximation or parameter-estimation error.

The holdout period determines the members of $\widehat{\mathfrak P}$ and is
also used for the descriptive CPT calculation. It is therefore not an
independent portfolio evaluation. Such an evaluation would require an
additional sample or a rolling design and would also need to account for
turnover and transaction costs.

\section{Conclusion}\label{sec:conclusion}

We estimated six parametric mixing models and a grid NPMLE under the same
determinant constraint. The mixing mean $m=\E(Z)$ was estimated for every
model. In the marginal analysis, the training score, holdout score, information
criteria, and tail-quantile errors did not give a common ranking. Three
inverse-gamma estimates reached the imposed lower bound and were treated as
constrained estimates.

In the 30-dimensional analysis, all retained mixture models had higher holdout
log scores than the multivariate Gaussian model. The paired block-bootstrap
comparison retained GIG, inverse Gaussian, inverse gamma, lognormal, and the
NPMLE. Gamma, exponential, and Gaussian specifications were excluded. Thus,
the predictive comparison gave a set of mixing laws rather than a unique law.

On the common direction, the multivariate problem reduces to the maximization
of a scalar lower envelope. The robust solution exists, and the breakpoints,
stationary points, and branch intersections give a candidate set. Interval
branch and bound certifies the finite-scenario optimum. In the application,
the model-specific exposures were close, and the NPMLE was the active model at
the robust solution for both positive reference gaps. These results concern
the specified common direction. The homogeneity result determines the scaling
with the reference gap, while the ordering of the fitted CPT functions
determines the active model.
 \appendix

\section{A sufficient condition for continuity of the robust objective}
\label{app:dro-continuity-proof}

For the weighting functions in Equation~\eqref{eq:cpt-weight-functions},
there exist constants $C_+,C_->0$, depending only on $\delta_+$ and
$\delta_-$, such that
\begin{equation}\label{eq:probability-weight-upper-bound}
w^+(p)\leq C_+p^{\delta_+},
\qquad
w^-(p)\leq C_-p^{\delta_-},
\qquad 0\leq p\leq1.
\end{equation}
Indeed, $w^+(p)/p^{\delta_+}$ and $w^-(p)/p^{\delta_-}$ are bounded over
$0<p\leq1$ and have finite limits as $p\downarrow0$.

Let $a_0=r_f-r_0$ and $Y_k(c)=a_0+c\eta_k$. Integration by parts in
Equation~\eqref{eq:cpt-functional} gives
\begin{align}
\Gamma_k(c)
={}&\alpha_+\int_0^\infty y^{\alpha_+-1}
w^+\!\left\{\Pr_{\widehat P_k}[Y_k(c)>y]\right\}\,\mathrm dy\notag\\
&-\lambda_{\ell}\alpha_-\int_0^\infty y^{\alpha_--1}
w^-\!\left\{\Pr_{\widehat P_k}[Y_k(c)<-y]\right\}\,\mathrm dy.
\label{eq:cpt-tail-integral}
\end{align}
Let $q_k$ satisfy Equation~\eqref{eq:dro-moment-condition}. For
$0\leq c\leq c_{\max}$ and $y\geq2|a_0|$,
\begin{align}
\Pr_{\widehat P_k}\{|Y_k(c)|>y\}
&\leq
\Pr_{\widehat P_k}\left\{|\eta_k|>
\frac{y-|a_0|}{c_{\max}}\right\}\notag\\
&\leq(2c_{\max})^{q_k}
\E_{\widehat P_k}|\eta_k|^{q_k}y^{-q_k},
\label{eq:uniform-tail-bound}
\end{align}
where the second line is Markov's inequality. Combining
Equations~\eqref{eq:probability-weight-upper-bound} and
\eqref{eq:uniform-tail-bound}, the two integrands in
Equation~\eqref{eq:cpt-tail-integral} are bounded at infinity by
\begin{equation}\label{eq:cpt-tail-dominators}
C_{k,+}y^{\alpha_+-1-q_k\delta_+}
\quad\text{and}\quad
C_{k,-}y^{\alpha_--1-q_k\delta_-},
\end{equation}
respectively. Both powers are integrable at infinity by
Equation~\eqref{eq:dro-moment-condition}. Near zero, the bounds
$0\leq w^\pm(p)\leq1$ give the integrable dominators
$y^{\alpha_+-1}$ and $y^{\alpha_--1}$. The two tail integrals are therefore
finite, with dominators that can be chosen uniformly over
$c\in[0,c_{\max}]$.

For $c>0$, positive-definite $\widehat\Sigma_k$ and $Z>0$ ensure that every
nonzero projection $Y_k(c)$ has a continuous distribution. If $c_n\to c$,
the tail probabilities in Equation~\eqref{eq:cpt-tail-integral} converge
pointwise except at a finite set of possible degenerate thresholds. At $c=0$,
the only possible discontinuity occurs at $|a_0|$, a Lebesgue-null threshold.
Dominated convergence therefore yields
\begin{equation}\label{eq:cpt-branch-continuity-limit}
\lim_{n\to\infty}\Gamma_k(c_n)=\Gamma_k(c).
\end{equation}
Thus $\Gamma_k$ is finite and continuous on $[0,c_{\max}]$, as required in
Proposition~\ref{prop:finite-distribution-dro}.

\clearpage
\section{Thirty-dimensional vector and mixing-law estimates}

Table~\ref{tab:mv-vector-appendix} reports the location, skewness loading, and
conditional standard deviation for inverse-gamma mixing and the NPMLE, where
$\widehat\sigma_j=\sqrt{\widehat\Sigma_{jj}}$. Location and skewness loadings
are in daily percentage-return units; conditional standard deviations are in
percentage units. A diagonal entry does not represent the full
$\widehat\Sigma$. The two $30\times30$ conditional covariance estimates are
provided with the reproducibility files.

\begin{table}[H]
\centering
\footnotesize
\renewcommand{\arraystretch}{0.86}
\setlength{\tabcolsep}{4pt}
\caption{Vector parameter estimates for the 30-dimensional NMVM.}
\label{tab:mv-vector-appendix}
\begin{tabular}{lrrr rrr}
\toprule
& \multicolumn{3}{c}{Inverse-gamma EM/ECM}
& \multicolumn{3}{c}{Grid NPMLE}\\
\cmidrule(lr){2-4}\cmidrule(lr){5-7}
Asset & $\widehat\mu_j$ & $\widehat\gamma_j$ & $\widehat\sigma_j$
& $\widehat\mu_j$ & $\widehat\gamma_j$ & $\widehat\sigma_j$\\
\midrule
MMM  &-0.017&-0.001&1.737&-0.027& 0.008&1.735\\
AXP  & 0.120&-0.060&2.178& 0.116&-0.055&2.175\\
AMGN & 0.034& 0.002&1.623& 0.034& 0.002&1.624\\
AMZN & 0.160&-0.094&2.355& 0.142&-0.077&2.355\\
AAPL & 0.175&-0.078&2.058& 0.170&-0.073&2.058\\
BA   & 0.155&-0.205&2.996& 0.145&-0.195&2.997\\
CAT  & 0.171&-0.091&2.069& 0.157&-0.077&2.070\\
CVX  & 0.049&-0.009&2.067& 0.037& 0.003&2.068\\
CSCO & 0.152&-0.140&1.590& 0.148&-0.136&1.590\\
KO   & 0.110&-0.082&1.208& 0.118&-0.089&1.207\\
DIS  & 0.005&-0.037&2.128& 0.012&-0.044&2.126\\
GS   & 0.137&-0.067&1.982& 0.124&-0.054&1.980\\
HD   & 0.181&-0.129&1.721& 0.164&-0.112&1.722\\
HON  & 0.105&-0.080&1.594& 0.096&-0.071&1.594\\
IBM  & 0.091&-0.045&1.541& 0.085&-0.039&1.541\\
JNJ  & 0.021&-0.009&1.185& 0.024&-0.012&1.184\\
JPM  & 0.143&-0.099&1.872& 0.134&-0.090&1.870\\
MCD  & 0.116&-0.082&1.254& 0.115&-0.082&1.255\\
MRK  & 0.053&-0.005&1.440& 0.059&-0.011&1.439\\
MSFT & 0.198&-0.101&1.909& 0.203&-0.105&1.907\\
NKE  & 0.117&-0.117&2.076& 0.116&-0.117&2.073\\
NVDA & 0.380&-0.107&3.575& 0.414&-0.140&3.574\\
PG   & 0.097&-0.059&1.199& 0.108&-0.070&1.198\\
CRM  & 0.237&-0.199&2.489& 0.223&-0.185&2.489\\
SHW  & 0.204&-0.157&1.830& 0.200&-0.153&1.831\\
TRV  & 0.129&-0.083&1.683& 0.125&-0.078&1.684\\
UNH  & 0.111&-0.059&1.693& 0.123&-0.071&1.696\\
VZ   &-0.005&-0.008&1.312& 0.006&-0.019&1.312\\
V    & 0.153&-0.116&1.615& 0.154&-0.116&1.614\\
WMT  & 0.105&-0.049&1.317& 0.110&-0.054&1.314\\
\bottomrule
\end{tabular}
\end{table}

For each model, write the projection on the common direction as
$\widehat\eta_k=\widehat a_k+\widehat b_kZ_k+
\widehat s_k\sqrt{Z_k}N$. Table~\ref{tab:cpt-projection-appendix} reports the
projection coefficients from Equation~\eqref{eq:robust-eta-means}. All five
models use the same $q_0$ and $c_{\max}=0.045458$, but
$\E_{\widehat P_k}(\widehat\eta_k)$ is not constrained to equal one. The last
column is the mean after compression to 1,024 equally weighted scenarios.
Asset-level entries of $q_0$ are stored in
\texttt{distributionally\_robust\_common\_direction.csv}.

\begin{table}[H]
\centering
\footnotesize
\setlength{\tabcolsep}{4pt}
\caption{Projection diagnostics on the common decision ray.}
\label{tab:cpt-projection-appendix}
\begin{tabular}{lrrrrrr}
\toprule
Model & $\widehat m_k$ & $\widehat a_k$ & $\widehat b_k$
& $\widehat s_k$ & $\E(\widehat\eta_k)$ & Scenario mean\\
\midrule
GIG              & 1.00792 & 0.90488 & 0.09296 & 7.25351 & 0.99858 & 0.99807\\
inverse Gaussian & 1.00063 & 0.88086 & 0.11906 & 7.24291 & 1.00000 & 1.00048\\
inverse gamma    & 1.00720 & 0.90531 & 0.09258 & 7.25339 & 0.99856 & 0.99805\\
lognormal        & 0.96471 & 0.87955 & 0.11985 & 7.24541 & 0.99517 & 0.99505\\
NPMLE            & 1.02503 & 0.96004 & 0.03898 & 7.25500 & 1.00000 & 0.99924\\
\bottomrule
\end{tabular}
\end{table}

\subsection{Mixing parameters and effective support points}

\begin{table}[H]
\centering
\footnotesize
\setlength{\tabcolsep}{3pt}
\caption{Mixing-law and computational results in 30 dimensions.}
\label{tab:mv-mixing-appendix}
\begin{tabular}{lcccccc}
\toprule
Method & Mixing parameters & $\widehat m$ & $\widehat{\Var}(Z)$
& Supports & Iterations & Time (s)\\
\midrule
inverse-gamma EM/ECM
& $\widehat\alpha=2.5716,\widehat\beta=1.5829$
& 1.0072 & 1.7748 & -- & 6 & 0.0318\\
grid NPMLE & 45-point initial grid
& 1.0250 & 1.6837 & 24 & 300 & 2.7368\\
\bottomrule
\end{tabular}
\end{table}

The grid routine reached its 300-iteration cap. Reapplying one complete
EM/ECM cycle at the saved estimate increased the average log likelihood by
$1.43\times10^{-6}$, so the reported scores and parameters are stable at the
precision used in the tables.

Table~\ref{tab:mv-support-appendix} lists the NPMLE support points whose
weights exceed $10^{-3}$. Their weights sum to 0.9982. The remaining 21 grid
points have total weight 0.0018 and are not counted as effective supports. All
support points and unrounded weights are supplied in the reproducibility files.

\begin{table}[H]
\centering
\small
\caption{Effective NPMLE support points and weights.}
\label{tab:mv-support-appendix}
\begin{tabular}{rrr@{\hspace{3em}}rrr}
\toprule
Index & Support $z_m$ & Weight $\widehat\pi_m$
& Index & Support $z_m$ & Weight $\widehat\pi_m$\\
\midrule
1  & 0.1757 & 0.0107 & 13 & 1.3730  & 0.0330\\
2  & 0.2035 & 0.0385 & 14 & 1.5902  & 0.0543\\
3  & 0.2357 & 0.0011 & 15 & 1.8418  & 0.0331\\
4  & 0.3662 & 0.0159 & 16 & 2.1331  & 0.0058\\
5  & 0.4241 & 0.1921 & 17 & 2.4705  & 0.0086\\
6  & 0.4912 & 0.1068 & 18 & 2.8613  & 0.0402\\
7  & 0.5689 & 0.0532 & 19 & 3.3139  & 0.0034\\
8  & 0.6589 & 0.0769 & 20 & 4.4453  & 0.0024\\
9  & 0.7631 & 0.1032 & 21 & 5.1484  & 0.0090\\
10 & 0.8838 & 0.0891 & 22 & 10.7290 & 0.0056\\
11 & 1.0236 & 0.0704 & 23 & 14.3917 & 0.0014\\
12 & 1.1855 & 0.0423 & 24 & 16.6682 & 0.0011\\
\bottomrule
\end{tabular}
\end{table}
 
\section*{Data availability statement}
The adjusted prices, log returns, fitted-model outputs, and numerical certificates used in this study are included with the reproducibility files accompanying the manuscript. The market data were obtained from Yahoo Finance; the retrieval script records the tickers and sample dates used in the analysis.

\section*{Code availability statement}
The Python implementations of the parametric EM/ECM estimators, the grid NPMLE, the block-bootstrap comparison, the CPT calculations, and the interval global-optimality certificate are included with the reproducibility files.

\section*{Conflict of interest}
The author declares no conflict of interest.

\section*{Funding}
This research received no external funding.

\begingroup
\small
\bibliographystyle{unsrtnat}

\begin{thebibliography}{16}
\providecommand{\natexlab}[1]{#1}
\providecommand{\url}[1]{\texttt{#1}}
\expandafter\ifx\csname urlstyle\endcsname\relax
  \providecommand{\doi}[1]{doi: #1}\else
  \providecommand{\doi}{doi: \begingroup \urlstyle{rm}\Url}\fi

\bibitem[Aas and Haff(2006)]{Aas_Haff_2006_GH_Skew_t}
Kjersti Aas and Ingrid~Hob{\ae}k Haff.
\newblock The generalized hyperbolic skew {Student's t}-distribution.
\newblock \emph{Journal of Financial Econometrics}, 4\penalty0 (2):\penalty0
  275--309, 2006.

\bibitem[Adcock et~al.(2015)Adcock, Eling, and
  Loperfido]{Adcock_2015_Skewed_in_Finance_Actuarial_science}
Christopher Adcock, Martin Eling, and Nicola Loperfido.
\newblock Skewed distributions in finance and actuarial science: A review.
\newblock \emph{The European Journal of Finance}, 21\penalty0
  (13--14):\penalty0 1253--1281, 2015.

\bibitem[Barndorff-Nielsen(1997)]{BarndorffNielsen1997NIG}
Ole~E. Barndorff-Nielsen.
\newblock Processes of normal inverse gaussian type.
\newblock \emph{Finance and Stochastics}, 2\penalty0 (1):\penalty0 41--68,
  1997.

\bibitem[Madan et~al.(1998)Madan, Carr, and Chang]{MadanCarrChang1998VG}
Dilip~B. Madan, Peter~P. Carr, and Eric~C. Chang.
\newblock The variance gamma process and option pricing.
\newblock \emph{Review of Finance}, 2\penalty0 (1):\penalty0 79--105, 1998.

\bibitem[Dempster et~al.(1977)Dempster, Laird, and
  Rubin]{Dempster_1977_EM_algorithm}
Arthur~P. Dempster, Nan~M. Laird, and Donald~B. Rubin.
\newblock Maximum likelihood from incomplete data via the {EM} algorithm.
\newblock \emph{Journal of the Royal Statistical Society: Series B
  (Methodological)}, 39\penalty0 (1):\penalty0 1--38, 1977.

\bibitem[Protassov(2004)]{Protassov2004EM}
R.~S. Protassov.
\newblock {EM}-based maximum likelihood parameter estimation for multivariate
  generalized hyperbolic distributions with fixed lambda.
\newblock \emph{Statistics and Computing}, 14\penalty0 (1):\penalty0 67--77,
  2004.

\bibitem[Kiefer and Wolfowitz(1956)]{KieferWolfowitz1956}
Jack Kiefer and Jacob Wolfowitz.
\newblock Consistency of the maximum likelihood estimator in the presence of
  infinitely many incidental parameters.
\newblock \emph{The Annals of Mathematical Statistics}, 27\penalty0
  (4):\penalty0 887--906, 1956.

\bibitem[Laird(1978)]{Laird1978NPMLE}
Nan Laird.
\newblock Nonparametric maximum likelihood estimation of a mixing distribution.
\newblock \emph{Journal of the American Statistical Association}, 73\penalty0
  (364):\penalty0 805--811, 1978.

\bibitem[Abudurexiti et~al.(2024{\natexlab{a}})Abudurexiti, He, Hu, Rachev,
  Sayit, and Sun]{Abudurexiti_2024_Mean_CVaR_Skewness_portfolio}
Nuerxiati Abudurexiti, Kai He, Dongdong Hu, Svetlozar~T. Rachev, Hasanjan
  Sayit, and Ruoyu Sun.
\newblock Portfolio analysis with mean-{CVaR} and mean-{CVaR}-skewness
  criteria based on mean-variance mixture models.
\newblock \emph{Annals of Operations Research}, 336\penalty0 (1):\penalty0
  945--966, 2024{\natexlab{a}}.
\newblock \doi{10.1007/s10479-023-05396-1}.

\bibitem[Tversky and Kahneman(1992)]{Tversky_Kaheman_1992_CPT}
Amos Tversky and Daniel Kahneman.
\newblock Advances in prospect theory: Cumulative representation of
  uncertainty.
\newblock \emph{Journal of Risk and Uncertainty}, 5\penalty0 (4):\penalty0
  297--323, 1992.

\bibitem[He and Zhou(2011)]{He_and_Zhou_2011_Singel_Period_Portfolio}
Xue~Dong He and Xun~Yu Zhou.
\newblock Portfolio choice under cumulative prospect theory: An analytical
  treatment.
\newblock \emph{Management Science}, 57\penalty0 (2):\penalty0 315--331, 2011.
\newblock \doi{10.1287/mnsc.1100.1269}.

\bibitem[Kwak and Pirvu(2018)]{Kwak_and_Pirvu_CPT_Skew_t}
Minsuk Kwak and Traian~A. Pirvu.
\newblock Cumulative prospect theory with generalized hyperbolic skewed $t$
  distribution.
\newblock \emph{SIAM Journal on Financial Mathematics}, 9\penalty0
  (1):\penalty0 54--89, 2018.

\bibitem[Consigli et~al.(2019)Consigli, Hitaj, and
  Mastrogiacomo]{Giorgio_2019_portfolio_CPT}
Giorgio Consigli, Asmerilda Hitaj, and Elisa Mastrogiacomo.
\newblock Portfolio choice under cumulative prospect theory: Sensitivity
  analysis and an empirical study.
\newblock \emph{Computational Management Science}, 16\penalty0 (1):\penalty0
  129--154, 2019.

\bibitem[Luxenberg et~al.(2024)Luxenberg, Schiele, and
  Boyd]{Eric_2024_CPT_convex_optimization}
Eric Luxenberg, Philipp Schiele, and Stephen Boyd.
\newblock Portfolio optimization with cumulative prospect theory utility via
  convex optimization.
\newblock \emph{Computational Economics}, 64\penalty0 (5):\penalty0 3027--3047,
  2024.

\bibitem[McNeil et~al.(2015)McNeil, Frey, and Embrechts]{McNeil2015QRM}
Alexander~J. McNeil, Ruediger Frey, and Paul Embrechts.
\newblock \emph{Quantitative Risk Management: Concepts, Techniques and Tools}.
\newblock Princeton University Press, revised edition, 2015.

\bibitem[Abudurexiti et~al.(2024{\natexlab{b}})Abudurexiti, Bayraktar, Hayashi,
  and Sayit]{Abudurexiti_2024_expected_utility}
Nuerxiati Abudurexiti, Erhan Bayraktar, Takaki Hayashi, and Hasanjan Sayit.
\newblock Two-fund separation under hyperbolically distributed returns and
  concave utility functions.
\newblock \emph{arXiv preprint arXiv:2410.04459}, 2024{\natexlab{b}}.
\newblock \doi{10.48550/arXiv.2410.04459}.
\newblock Revised February 2026.

\end{thebibliography}

\endgroup

\end{document}